\keywords{Timed Automata, Design, Reachability, Safety}
\apptocmd{\sloppy}{\hbadness 10000\relax}{}{} 
\tikzset{every state/.style={minimum size=0.5pt}}
\tikzset{every edge/.append style={font=\small}}
\definecolor{myGreen}{RGB}{0,200,0}
\newcommand{\blue}[1]{{\color{black}  #1}}
\newcommand{\green}[1]{{\color{myGreen}  #1}}
\newcommand{\TA}{\mathcal{A}}
\newcommand{\LTS}{\mathcal{T}}
\newcommand\mynodedistance{2.3cm}
\newcommand{\shrink}{\ensuremath{\mathsf{shrink}}}
\newcommand{\enlarge}{\ensuremath{\mathsf{enlarge}}}
\newcommand{\findSSeed}{\ensuremath{\mathsf{findSSeed}}}
\newcommand{\findISeed}{\ensuremath{\mathsf{findISeed}}}
\begin{document}

\title[Timed Automata Robustness Analysis via Model Checking]{Timed Automata Robustness Analysis via Model Checking}

\author[J.~Bend{\'\i}k]{Jaroslav Bend{\'\i}k\lmcsorcid{0000-0001-9784-3028}}[a]	
\author[A.~Sencan]{Ahmet Sencan\lmcsorcid{0000-0003-4275-4655}}[b]	
\author[E.~A.~Gol]{Ebru Aydin Gol\lmcsorcid{0000-0002-5813-9836}}[b]	
\author[I.~\v{C}ern\'a]{Ivana \v{C}ern\'a\lmcsorcid{0000-0002-0711-9552}}[c]	

\address{Max Planck Institute for Software Systems, Kaiserslautern, Germany}	
\email{xbendik@mpi-sws.org}  
\address{Department of Computer Engineering, Middle East Technical University, Ankara, Turkey}	
\email{\{sencan.ahmet, ebrugol\}@metu.edu.tr}  
\address{Faculty of Informatics, Masaryk University, Brno, Czech Republic}	
\email{cerna@fi.muni.cz}  

\begin{abstract}
Timed automata (TA) have been widely adopted as a suitable formalism to model time-critical systems.
Furthermore, contemporary model-checking tools allow the designer to check whether a TA complies with a system specification.
However, the exact timing constants are often uncertain during the design phase. Consequently, the designer is often able to build a TA with a correct structure, however, the timing constants need to be tuned to satisfy the specification. Moreover, even if the TA initially satisfies the specification, it can be the case that just a slight perturbation during the implementation causes a violation of the specification.
Unfortunately, model-checking tools are usually not able to provide any reasonable guidance on how to fix the model in such situations.
In this paper, we propose several concepts and techniques to cope with the above mentioned design phase issues when
dealing
with reachability and safety specifications.

\end{abstract}

\maketitle              
%

\section{Introduction}\label{sec:intro}

Timed automata (TA)~\cite{alur1994theory} extend finite automata with a set of real-time variables, called clocks. The clocks enrich the semantics and the constraints on the clocks restrict the behavior of the automaton, which are particularly important in modeling time-critical systems.   The examples of TA models of critical systems include scheduling of real-time systems~\cite{811256,david2009model,guan2007}, medical devices~\cite{pacemakers:2015,10.1007/s10009-013-0289-7}, rail-road crossing systems~\cite{Wang:2004} and home-care plans~\cite{DBLP:conf/icsoc/GaniBST14}.  

Model-checking methods allow for verifying whether a given TA meets a given system specification. Contemporary model-checking tools, such as UPPAAL~\cite{UPPAAL4:0} or Imitator~\cite{imitatorTool}, have proved to be practically applicable on various industrial case studies~\cite{UPPAAL4:0,Imitatorstudy2019,hytech}.
Unfortunately, during the system design phase, the system information is often incomplete.
A designer is often able to build a TA with correct structure, i.e., exactly capturing locations and transitions of the modeled system, however the exact clock (timing) constraints that enable/trigger the transitions can be uncertain.
Consequently, the produced TA often might not meet the specification (i.e., it does not pass the model-checking) and it needs to be fixed. On the other hand, even though the TA meets the specification, some of its constraints might unnecessarily restrict its behavior or the specification might get violated with a few changes over the constraints (e.g.\ threshold change, constraint removal). In this paper, we present methods to analyze and tune the constraints of the timed automaton to address these issues for reachability and safety specifications. In particular, we identify a minimal set of constraints that needs to be removed to satisfy a reachability specification and we find a maximal set of constraints whose removal do not result in violation of a safety specification. In both cases, we further analyze the thresholds that appear in these constraints.

\paragraph{Tuning TA for a Reachability Specification}
In model-checking, if the considered specification declares an existential property, such as reachability, the property has to hold on a trace of the TA\@. If the property holds, the model checker returns ``yes'' and generates a witness trace satisfying the property. However, if the property does not hold, the model checker usually returns just ``no'' and does not provide any additional information that would help the designer to correct the TA\@.
In this paper, we first study the following problem: given a timed automaton $\TA$ and a reachability property that is not satisfied by $\TA$, relax
 clock constraints of $\TA$ such that the resultant automaton $\TA'$ satisfies the reachability property.
Moreover, the goal is to minimize the number of the relaxed clock constraints and, secondary, also to minimize the overall change of the timing constants used in the clock constraints.
We propose a two step solution for this problem. In the first step, we identify a \emph{minimal sufficient reduction (MSR)} of $\TA$, i.e., an automaton $\TA''$ that satisfies the reachability property and originates from $\TA$ by removing only a minimal necessary set of clock constraints.
In the second step, instead of completely removing the clock constraints, we relax the constraint thresholds. We present two methods for this purpose. First, we employ mixed integer linear programming (MILP) to find a minimal relaxation of the constraints that leads to a satisfaction of the reachability property along a witness path. As the second method, we parametrize the identified constraints and use a parameter synthesis tool to find a minimal parameter valuation such that the target set becomes reachable. The second method is guaranteed to find the globally optimal relaxation as it considers all witness paths, while the first one is more efficient.
We thoroughly compare both the methods on a case study.

The underlying assumption is that during the design the most suitable timing constants reflecting the system properties are defined. Thus, our goal is to generate a TA satisfying the reachability property by changing a minimum number of timing constants.
Some of the constraints of the initial TA can be strict (no relaxation is possible), which can easily be integrated to the proposed solution. Thus, the proposed method can be viewed as a way to handle design uncertainties: develop a TA $\TA$ in a best-effort basis and apply our algorithm to find a $\TA'$ that is \textit{as close as} possible to $\TA$ and satisfies the given reachability property.

\paragraph{Tuning TA for a Safety Specification}
On contrary to existential properties, a universal property, e.g., safety or unavoidability, needs to hold on each trace of the TA\@.
If a safety specification does not hold for the TA, the model-checker returns ``no'' and generates a trace along which the property is violated.
In recent studies, such traces are used to repair the model in an automated way~\cite{10.1007/978-3-030-25540-4_5,ErgurtunaEarlyaccess}.
In the other case, when the safety property holds, the ``yes'' answer obtained from a model-checker simply states that the TA does not have a trace violating the specification. However, the ``yes'' answer does not provide further information on which constraints are effective in avoiding unsafe behaviors and how the verification result changes if some of the constraints are relaxed or removed. The second problem we study aims to provide additional information on a positive verification result for a safety specification described as avoiding a set of ``unsafe'' locations. In particular, we study the following problem: given a timed automaton $\TA$ and a safety property that is satisfied by $\TA$, remove and/or relax the clock constraints of $\TA$ such that the resultant automaton $\TA'$ still satisfies the safety property.
Here, our primary goal is to minimize the number of constraints that need to be left in the TA to prevent reaching the unsafe locations.
Equivalently, we maximize the number of constraints that can be removed from the TA while keeping the unsafe locations unreachable.
Our secondary goal is to maximize the total change in the timing constants used in the remaining clock constraints, where we consider two scenarios: (1) maximize the total change (as in the reachability case) (2) relax each clock constraint with the same amount and maximize this amount. Again, we present a two step solution to the considered problem. In the first step, we identify a \emph{minimal guarantee (MG)} of $\TA$ that is a minimal set of constraints that need to be left in the automaton to ensure that the unsafe locations are still unreachable. In other words the automaton $\TA''$ obtained by removing the constraints that are not in the MG still satisfies the safety specification and removing any additional constraint results in a violation. In the second step, we relax the thresholds of the constraints from the MG (i.e.\ clock constraints of $\TA''$). For both of the aforementioned relaxation scenarios, we parametrize the constraints of $\TA''$ and employ a parameter synthesis tool.

The methods we develop to solve the second problem allows us to relax the TA as much as possible without violating the safety specification. In general, during the design of the automaton, redundant constraints can be added unintentionally to ensure safety. The results of our analysis allows the designer to identify and remove such unnecessary constraints. Furthermore, the constraint constants can be too tight unnecessarily restricting the set of possible behaviors of the automaton. The results obtained in the second step help the designer to relax such constants. On the other hand, if it is not possible to further relax the constraints in the MG, small perturbations results in violation of the specification, in which case, the designer might choose to further restrict some of the constraint constants from the MG\@. Consequently, the developed method is intended to assist the designer to improve the model that is generated in a best-effort manner.

\blue{The proposed approach for tuning a TA for reachability specifications first appeared in~\cite{DBLP:conf/tacas/BendikSGC21}. This paper extends~\cite{DBLP:conf/tacas/BendikSGC21} by introducing the minimal guarantee concepts and the corresponding methods to (\textit{i}) generate MG and (\textit{ii}) the corresponding relaxations for tuning TA for safety specifications.}

\paragraph{Outline.} The rest of the paper is organized as follows. Section~\ref{sec:prelims} introduces basic concepts used throughout the paper and formally defines the problems we deal with. Subsequently, in Sections~\ref{sec:msr} and~\ref{sec:mg}, we describe our approaches for identifying Minimal Sufficient Reductions (MSRs) and Minimal Guarantees (MGs), respectively. In Section~\ref{sec:msr-relax}, we describe how to just relax timing constraints in an MSR instead of completely removing the constraints from the TA\@. Similarly, in Section~\ref{sec:mg-relax} we show how to further relax an MG via parameter synthesis.
In Section~\ref{sec:related}, we provide an overview of related work.
Finally, we experimentally evaluate the proposed techniques in Section~\ref{sec:experiments}, and conclude in Section~\ref{sec:conclusion}.

\section{Preliminaries}\label{sec:prelims}

\subsection{Timed Automata}\label{sec:prelims:ta}

A \emph{timed automaton} (TA)~\cite{alur1999timed,alur1994theory,larsen1993time} is a finite-state machine extended with a finite set $C$ of real-valued clocks. A \emph{clock} $x\in C$ measures the time spent after its last reset. In a TA, clock constraints are defined for locations (states) and transitions. A \emph{simple clock constraint} is defined as $ x - y \sim c$ where $x, y \in C \cup \{0\}$, $\sim \in \{<, \leq \}$  and $c \in \mathbb{Z} \cup \{\infty\}$.%
\footnote{Simple constraints are only defined as upper bounds to simplify the presentation. This definition is not restrictive since $x - y \geq c$ and $x \geq c$ are equivalent to $y - x \leq - c$ and $0 - x \leq -c$, respectively. A similar argument holds for strict inequality $(>)$.}
Simple clock constraints and constraints obtained by combining these with the conjunction operator ($\wedge$) are called \emph{clock constraints}.
The sets of simple and all clock constrains  are denoted by $\Phi_S(C)$ and $\Phi(C)$, respectively. For a clock constraint $\phi \in \Phi(C)$,  $\mathcal{S}(\phi)$ denotes the simple constraints from $\phi$, e.g., $\mathcal{S}(x - y < 10 \wedge y \leq 20) = \{x - y < 10, y \leq 20\}$.
A clock constraint is called \emph{parametric} if the numerical constant (i.e. $c$) is represented by a parameter.
A \emph{clock valuation} $v: C \to \mathbb{R}_+$ assigns non-negative real values to each clock.
The notation $v \models \phi$ denotes that the clock constraint $\phi$ evaluates to true when each clock $x$ is replaced with  $v(x)$.
For a clock valuation $v$ and $ d \in \mathbb{R}_+$, $v+d$ is the clock valuation obtained by \emph{delaying} each clock by $d$, i.e., $(v+d)(x) = v(x) +d $ for each $x\in C$. For $\lambda \subseteq C$, $v[\lambda := 0]$ is the clock valuation obtained after \emph{resetting} each clock from $\lambda$, i.e.,   $v[\lambda := 0](x) = 0$ for each $x\in \lambda$ and $v[\lambda := 0](x) = v(x)$ for each $x\in C\setminus \lambda$.

\begin{defi}[Timed Automata]
A \textit{timed automaton}  $\TA = (L, l_0, C, \Delta, Inv )$ is a tuple, where
$L$ is a finite set of locations,
$l_{0} \in L$ is the initial location,
$C$ is a finite set of clocks,
$\Delta \subseteq L \times 2^C \times  \Phi(C) \times L$ is a finite transition relation, and
$Inv: L \to \Phi(C)$ is an invariant function.
\end{defi}

For a  transition $e = (l_s, \lambda, \phi, l_t) \in \Delta$, $l_s$ is the source location, $l_t$ is the target location, $\lambda$ is the set of clocks reset on $e$ and $\phi$ is the guard (i.e., a clock constraint) tested for enabling $e$.
A Parametric TA (PTA) extends TA by allowing the use of parametric constraints. Given a PTA $\TA$ with parameter set $P$ and a parameter valuation $\textbf{p} : P \to \mathbb{N}$, a (non-parametric) TA $\TA(\textbf{p})$ is obtained by replacing each parameter $p \in P$ in $\TA$ with the corresponding valuation $\textbf{p}(p)$.
The semantics of a TA is given by a labelled transition system (LTS). An LTS is a tuple $\LTS = (S, s_0, \Sigma, \to)$, where $S$ is a set of states, $s_0 \in S$ is an initial state, $\Sigma$ is a set of symbols, and $\to\ \subseteq S \times \Sigma \times S$ is a transition relation. A transition $(s,a,s') \in\ \to$ is also denoted as $s \stackrel{a}{\to} s'$.

\begin{defi}[LTS semantics for TA] Given a TA $\TA = (L, l_0, C, \Delta, Inv )$, the labelled transition system $T(\TA) = (S, s_0, \Sigma, \to)$ is defined as follows:
\begin{itemize}
\item $S = \{(l, v) \mid l \in L, v\in \mathbb{R}_{+}^{|C|}, v \models Inv(l)\}$,
\item $s_0 = (l_0, \textbf{0})$, where $ \textbf{0}(x) = 0$ for each $x\in C$,
\item $\Sigma = \{act\} \cup  \mathbb{R}_{+}$, and
\item the transition relation $\to$ is defined by the following rules:
\begin{itemize}
\item delay transition: $(l, v) \stackrel{d}{\to} (l, v+d)$ if $v+d \models Inv(l)$
\item discrete transition: $(l, v) \stackrel{act}{\to} (l', v')$ if there exists $(l,\lambda, \phi, l') \in \Delta$ such that $v \models \phi$, $v' = v[\lambda:=0]$, and $v' \models Inv(l')$.
\end{itemize}
\end{itemize}
\end{defi}

\noindent
The notation $s \mathrel{{\to}_d} s'$ is used to denote a delay transition of duration $d$ followed by a discrete transition from $s$ to $s'$, i.e., $s  \stackrel{d}{\to} s  \stackrel{act}{\to} s'$.  A run $\rho$ of $\TA$ is  either a finite or an infinite alternating sequence of delay and discrete transitions, i.e.,
$\rho = s_0  {\to}_{d_0} s_1 {\to}_{d_1} s_2  {\to}_{d_2} \cdots$.
The set of all runs of $\TA$ is denoted by $[[ \TA ]]$.

A path $\pi$ of $\TA$ is an interleaving sequence of locations and transitions, $\pi= l_0, e_1, l_1, e_2, \ldots$, where $e_{i+1}  = (l_{i}, \lambda_{i+1}, \phi_{i+1}, l_{i+1})\in \Delta$ for each $i \geq 0$.
A path $\pi= l_0, e_1, l_1, e_2, \ldots$ is
\textit{realizable} if there exists a delay sequence $d_0, d_1, \ldots$ such that $(l_0, \textbf{0}) {\to}_{d_0} (l_1,  v_1) {\to}_{d_1} (l_1,  v_2 ) {\to}_{d_2} \cdots$ is a run of $\TA$ and for every $i \geq 1$, the $i$th discrete transition is taken according to $e_i$, i.e., $e_i = (l_{i-1}, \lambda_i, \phi_i, l_i)$, $v_{i-1} + d_{i-1} \models \phi_i$,  $v_i = (v_{i-1} + d_{i-1})[\lambda_i:=0]$ and $v_i \models Inv'(l_i)$.

For a TA $\TA$ and a subset of its locations $L_T \subseteq L$, $L_T$ is said to be \textit{reachable} on $\TA$ if there exists $\rho = (l_0, \textbf{0})  {\to}_{d_0} (l_1, v_1) {\to}_{d_1}  \ldots {\to}_{d_{n-1}} (l_n, v_n)\in [[\TA]]$ such that $l_n \in L_T$; otherwise, $L_T$ is \textit{unreachable}.
In this study, $L_T$ is used to denote the set of \emph{target locations} for reachability specifications and the set of \emph{unsafe locations} for safety specifications.
In the latter case, $\TA$ is called \emph{safe} if $L_T$ is unreachable; otherwise $\TA$ is \emph{unsafe}.
The reachability problem, $isReachable(\TA , L_T)$, is decidable and implemented in various verification tools including UPPAAL~\cite{UPPAAL4:0}. The verifier either returns ``No'' indicating that such a run does not exist, or it generates a run
(counter-example) leading from the initial state of $\TA$ to a location $v_n \in L_T$.

\begin{exa}\label{ex:running}
In Figure~\ref{fig:ex}, we illustrate a TA with 8 locations: $\{l_0,\ldots, l_7\}$, 9 transitions: $\{e_1, \ldots, e_9\}$, an initial location $l_0$, and a set of unreachable locations $L_T =\{l_4\}$.
\end{exa}

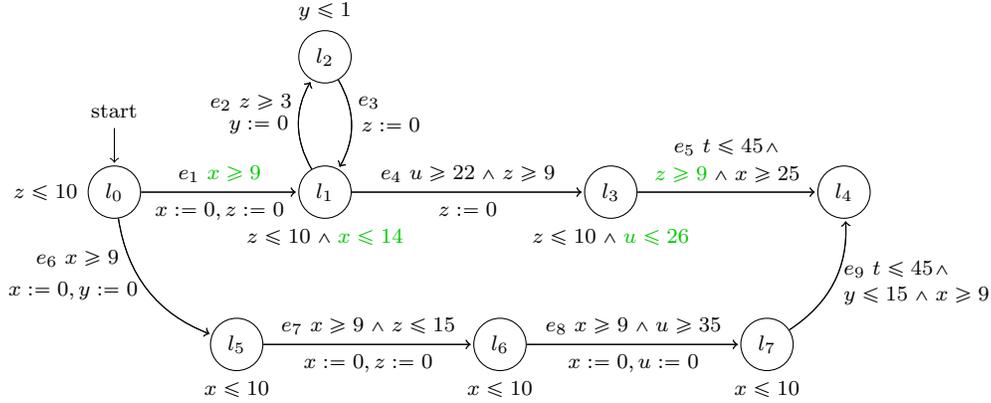
\begin{figure}[t]
\begin{center}
\begin{tikzpicture}[shorten >=1 pt, node distance = \mynodedistance, on
grid, auto]
\tikzstyle{every node}=[font=\scriptsize]
\node [state, initial,initial where=above, label = {left: $z \leq 10$}, initial where=left] (l_0) {$l_0$};
\node [state] (l_1) [right = of l_0, label = {below: $z \leq 10\wedge \green{x \leq 14}$}, xshift=0.5cm] {$l_1$};
\node [state] (l_2) [above = of l_1, label = {above: $y \leq  1$}, yshift=-0.5cm] {$l_2$};
\node [state] (l_3) [right = of l_1, label = {below: $ z \leq 10\wedge \green{u \leq 26}$}, xshift=1.5cm] {$l_3$};
\node [state] (l_4) [right = of l_3, xshift=0.8cm] {$l_4$};
\node [state] (l_5) [below right = of l_0, label = {below: $x \leq 10$}, yshift=-0.4cm] {$l_5$};
\node [state] (l_6) [right = of l_5, label = {below: $x \leq 10$},xshift=1.2cm] {$l_6$};
\node [state] (l_7) [right = of l_6, label = {below: $x \leq 10$},xshift=1.25cm] {$l_7$};
\path [->]
(l_0) edge node [below] {$x := 0, z:=0$} (l_1)
(l_0) edge node [above] {$e_1\    \green{x \geq 9} $} (l_1)
(l_1) edge [bend left] node [left, pos=0.75] {$e_2$  $z \geq3 $} (l_2)
(l_1) edge [bend left] node [left, pos=0.5] {$y := 0  $} (l_2)
(l_2) edge [bend left] node [right, pos=0.25] {$e_3 $} (l_1)
(l_2) edge [bend left] node [right, pos=0.5] {$z := 0  $} (l_1)
(l_1) edge node [above] {$e_4\ u \geq 22 \wedge z \geq 9$} (l_3)
(l_1) edge node [below] {$z:=0$} (l_3)
(l_3) edge node [above] {$\green{z\geq9}\wedge x\geq25$} (l_4)
(l_3) edge node [above,yshift=10] {$e_5\ t\leq45\wedge$} (l_4)
(l_0) edge [bend right] node [left, pos=0.25] {$e_6\ x \geq 9$} (l_5)
(l_0) edge [bend right] node [left, pos=0.5] {$x := 0, y:=0$} (l_5)
(l_5) edge node [above] {$e_7\ x \geq 9 \wedge z \leq 15$} (l_6)
(l_5) edge node [below] {$x:=0, z:=0$} (l_6)
(l_6) edge node [above] {$e_8\ x \geq 9 \wedge u \geq 35 $} (l_7)
(l_6) edge node [below] {$x:=0,u:=0$} (l_7)
(l_7) edge [bend right]node [right,yshift=-5] {$ y \leq 15\wedge x\geq9$} (l_4)
(l_7) edge [bend right]node [right,yshift=5] {$e_9 \ t\leq45\wedge$} (l_4)
;
\end{tikzpicture}
\end{center}
\caption{An illustration of a TA used in Examples~\ref{ex:running},~\ref{ex:msr} and~\ref{ex:msr2}.}\label{fig:ex}
\end{figure}

\subsection{Timed Automata Relaxation}\label{sec:prelims-relaxation}

For a timed automaton $\TA = (L, l_0, C, \Delta, Inv )$, the set of pairs of transition and associated simple constraints is defined in~\eqref{eq:guard_cons_set} and the set of pairs of location and associated simple constraints is defined in~\eqref{eq:inv_cons_set}.
\begin{align}
 \Psi(\Delta) &= \{ (e, \varphi)  \mid  e = (l_s, \lambda, \phi, l_t) \in \Delta, \varphi \in \mathcal{S}(\phi) \}\label{eq:guard_cons_set} \\
 \Psi(Inv) &= \{ (l, \varphi)  \mid  l \in L, \varphi \in \mathcal{S}(Inv(l)) \}\label{eq:inv_cons_set}
\end{align}

\begin{defi}[constraint-relaxation]\label{defn:constraint_relaxation}Let $\phi \in \Phi(C)$ be a constraint over $C$, $\Theta \subseteq \mathcal{S}(\phi)$ be a subset of its simple constraints and $\mathbf{r} : \Theta \to \mathbb{N}\cup \{\infty\}$ be a positive valued relaxation valuation. The relaxed constraint is defined as:
\begin{equation}
R (\phi, \Theta, \mathbf{r}) = \left ( \bigwedge_{\varphi \in \mathcal{S}(\phi) \setminus \Theta} \varphi  \right ) \wedge \left ( \bigwedge_{\varphi  = x - y \sim c \in \Theta} x - y \sim c + \mathbf{r}(\varphi) \right )
\end{equation}
\end{defi}
Intuitively, $R (\phi, \Theta, \mathbf{r})$ relaxes only the thresholds of simple constraints from $\Theta$ with respect to $\mathbf{r}$, e.g., $R( x - y \leq 10 \wedge y < 20, \{y < 20\},  \mathbf{r}) = x - y \leq 10 \wedge y < 23$, where $\mathbf{r}(y<20)=3$.  Setting a threshold to $\infty$ implies removing the corresponding  simple constraint, e.g., $R( x - y \leq 10 \wedge y < 20, \{y < 20\},\mathbf{r}) = x - y \leq 10$, where $\mathbf{r}(y<20)=\infty$. Note that $R (\phi, \Theta, \mathbf{r}) = \phi$ when $\Theta$ is empty.

\begin{defi}[($D,I,\mathbf{r}$)-relaxation]\label{defn:relaxation}
Let $\TA = (L, l_0, C, \Delta, Inv )$ be a TA, $D \subseteq \Psi(\Delta)$ and $I \subseteq \Psi(Inv)$ be transition and location constraint sets, and $\mathbf{r}: D  \cup I \to \mathbb{N}\cup \{\infty\}$ be a positive valued relaxation valuation.
The ($D,I,\mathbf{r}$)-relaxation of $\TA$, denoted $\TA_{<D,I,\mathbf{r}>}$, is a TA $\TA' = (L', l_0', C', \Delta', Inv' )$ such that:

\begin{itemize}
\item $L = L'$, $l_0 = l_0'$, $C = C'$, and
\item $\Delta' $ originates from $\Delta$ by relaxing $D$ via $\mathbf{r}$.  For $e = (l_s, \lambda, \phi, l_t) \in \Delta$, let $D |_{e} = \{\varphi \mid (e, \varphi) \in D\}$, and let $\mathbf{r}|_{e}(\varphi) = \mathbf{r}(e, \varphi)$, then
$\Delta' = \{  (l_s, \lambda, R(\phi, D |_{e}, \mathbf{r}|_{e}), l_t) \mid e = (l_s, \lambda, \phi, l_t) \in \Delta\}$
\item $Inv'$ originates from $Inv$ by relaxing $I$ via $\mathbf{r}$. For $l \in L$, let $I|_l = \{\varphi \mid (l, \varphi) \in I\}$, and $\mathbf{r}|_{l}(\varphi) = \mathbf{r}(l, \varphi)$, then $Inv'(l) = R(Inv(l), I |_{l}, \mathbf{r}|_{l})$.
\end{itemize}
\end{defi}

\noindent
Intuitively, the TA $\TA_{<D,I,\mathbf{r}>}$ emerges from $\TA$ by relaxing the guards of the transitions from the set $D$ and relaxing invariants of the locations from $I$ with respect to $\mathbf{r}$.

\begin{prop}\label{proposition:run-inclusion}
Let $\TA = (L, l_0, C, \Delta, Inv )$ be a timed automaton, $D \subseteq \Psi(\Delta)$ and $I \subseteq \Psi(Inv)$ be sets of simple guard and invariant constraints, and $\mathbf{r}: D  \cup I \to \mathbb{N}\cup \{\infty\}$ be a relaxation valuation. Then $[[\TA]] \subseteq [[\TA_{<D,I,\mathbf{r}>}]]$.
\end{prop}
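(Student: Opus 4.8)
The plan is to reduce the whole statement to a single monotonicity property of the relaxation operator $R$ and then lift it from individual constraints to whole runs, step by step along the transition relation.

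First I would isolate the key lemma: for every $\phi \in \Phi(C)$, every $\Theta \subseteq \mathcal{S}(\phi)$, every relaxation valuation $\mathbf{r}$, and every clock valuation $v$, if $v \models \phi$ then $v \models R(\phi,\Theta,\mathbf{r})$. Since $\phi$ is a conjunction of simple constraints, $v \models \phi$ means $v$ satisfies each conjunct. Inspecting Definition~\ref{defn:constraint_relaxation}, the conjuncts of $R(\phi,\Theta,\mathbf{r})$ are exactly the conjuncts of $\phi$ outside $\Theta$ (left untouched, hence still satisfied) together with, for each $\varphi = x - y \sim c \in \Theta$, the relaxed constraint $x - y \sim c + \mathbf{r}(\varphi)$. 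As $\mathbf{r}(\varphi)\geq 0$ we have $c \leq c + \mathbf{r}(\varphi)$, so $v(x) - v(y) \sim c$ forces $v(x) - v(y) \sim c + \mathbf{r}(\varphi)$; and when $\mathbf{r}(\varphi)=\infty$ the conjunct is dropped altogether and is trivially met. Thus $v$ satisfies every conjunct of $R(\phi,\Theta,\mathbf{r})$.

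Next I would specialise this to the components produced by Definition~\ref{defn:relaxation}. Because $Inv'(l) = R(Inv(l),I|_l,\mathbf{r}|_l)$ for every location and every guard of $\Delta'$ has the form $R(\phi,D|_e,\mathbf{r}|_e)$, the lemma immediately yields the two implications I need: $v \models Inv(l) \Rightarrow v \models Inv'(l)$ and $v \models \phi \Rightarrow v \models R(\phi,D|_e,\mathbf{r}|_e)$. In particular the state set of $T(\TA_{<D,I,\mathbf{r}>})$ contains that of $T(\TA)$, and since $L=L'$, $l_0=l_0'$, $C=C'$, the two LTSs share the same initial state $s_0 = (l_0,\mathbf{0})$. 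The core of the argument is then a transition-by-transition comparison: I would take an arbitrary run $\rho = s_0 {\to}_{d_0} s_1 {\to}_{d_1} \cdots \in [[\TA]]$ and show that every combined step $s {\to}_d s'$ of $\rho$ is also a step of $T(\TA_{<D,I,\mathbf{r}>})$. Writing such a step as a delay $(l,v) \stackrel{d}{\to} (l,v+d)$ followed by a discrete move $(l,v+d) \stackrel{act}{\to} (l',v')$ across some $e=(l,\lambda,\phi,l')\in\Delta$, the delay premise $v+d\models Inv(l)$ gives $v+d\models Inv'(l)$, so the same delay is enabled; and for the discrete move the corresponding transition $(l,\lambda,R(\phi,D|_e,\mathbf{r}|_e),l')\in\Delta'$ is enabled because $v+d\models\phi$ gives $v+d\models R(\phi,D|_e,\mathbf{r}|_e)$, the reset $v'=(v+d)[\lambda:=0]$ is unchanged, and $v'\models Inv(l')$ gives $v'\models Inv'(l')$. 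Hence $\rho$ is also a run of $\TA_{<D,I,\mathbf{r}>}$, and $[[\TA]]\subseteq[[\TA_{<D,I,\mathbf{r}>}]]$.

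I do not expect a genuine obstacle here; the statement is pure monotonicity. The only point deserving care is the bookkeeping in the discrete step, where three side conditions — the relaxed guard, the (unchanged) reset, and the relaxed target invariant — must all be re-verified against $\TA_{<D,I,\mathbf{r}>}$; but each of them collapses to the single monotonicity lemma established at the outset.
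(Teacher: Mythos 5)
Your proposal is correct and follows essentially the same route as the paper's proof: both rest on the monotonicity observation that $v \models \phi$ implies $v \models R(\phi,\Theta,\mathbf{r})$, and then lift it along an arbitrary run (the paper phrases this as realizability of the induced path $M(\pi)$ under the same delay sequence, which is exactly your transition-by-transition simulation). If anything, you spell out the key lemma and the $\mathbf{r}(\varphi)=\infty$ case explicitly, which the paper leaves as an unproved observation.
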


\begin{proof}
Observe that for a clock constraint $\phi \in \Phi(C)$, a subset of its simple constraints $\Theta \subseteq \mathcal{S}(\phi)$, a relaxation valuation $\mathbf{r}'$ for $\Theta$, and
the relaxed constraint $R (\phi, \Theta, \mathbf{r}')$ as in Definition~\ref{defn:constraint_relaxation}, it holds  that for any clock valuation $v: v \models \phi \implies v \models  R (\phi, \Theta, \mathbf{r}')$.
Now, consider a run $\rho = (l_0, \textbf{0})  {\to}_{d_0} (l_1, v_1) {\to}_{d_1} (l_2, v_2)  {\to}_{d_2} \cdots \in[[\TA]]$.
Let $\pi= l_0, e_1, l_1, e_2, \ldots$ with $e_{i}  = (l_{i-1}, \lambda_{i}, \phi_{i}, l_{i})\in \Delta$ for each $i \geq 1$ be the path realized as $\rho$ via delay sequence $d_0, d_1, \ldots$.
By Definition~\ref{defn:relaxation} for each $ (l, \lambda, \phi, l') \in \Delta$, there is $(l, \lambda, R(\phi, D |_{e}, \mathbf{r}|_{e}), l')\in \Delta'$.
We define a path induced by $\pi$ on $\TA_{<D,I,\mathbf{r}>}$ as:
\begin{equation}\label{eq:path_map}
M(\pi) = l_0, (l_{0}, \lambda_{1}, R(\phi_{1}, D |_{e_1}, \mathbf{r}|_{e_1}), l_{1}), l_1,  (l_{1}, \lambda_{2}, R(\phi_{2}, D |_{e_2}, \mathbf{r}|_{e_2}), l_2) , \ldots
\end{equation}
For each $i=0,\ldots,n-1$ it holds that $v_i \models R(Inv(l_i), D |_{l_i} ,  \mathbf{r}|_{l_i})$, $v_i + d_i\models R(Inv(l_i), D |_{l_i} ,  \mathbf{r}|_{l_i})$ and $v_i + d_i \models R(\phi_{i+1}, D |_{e_{i+1}}, \mathbf{r}|_{e_{i+1}})$. Thus $M(\pi)$ is realizable on
$\TA_{<D,I,\mathbf{r}>}$ via the same delay sequence and $\rho \in [[\TA_{<D,I,\mathbf{r}>}]]$. As $\rho \in [[\TA]]$ is arbitrary, we conclude that $[[\TA]] \subseteq [[\TA_{<D,I,\mathbf{r}>}]]$.
\end{proof}

\subsection{Reductions and Guarantees}\label{sec:prelims-reductions}

\begin{defi}\label{def:reduction}
A \emph{reduction} is a relaxation $\TA_{<D,I,\mathbf{r}>}$ of $\mathcal{A}$ such that $\mathbf{r}(a) = \infty$ for each $a\in D\cup I$.
Moreover, since $\mathbf{r}$ is fixed, we simply denote the reduction by $\TA_{<D,I>}$.
\end{defi}

Intuitively, a reduction $\TA_{<D,I>}$ effectively removes all the simple constraints $D \cup I$ from $\mathcal{A}$. Also, note that $\TA = \TA_{<\emptyset,\emptyset>}$. Hereafter, we use two notations for naming a reduction; either we simply use capital letters, e.g., $M,N,K$ to name a reduction, or we use the notation $\TA_{<D,I>}$ to also specify the sets $D,I$ of simple clock constraints.
Given a reduction $N = \TA_{<D,I>}$, $|N|$ denotes the cardinality $|D \cup I|$.
Furthermore,  $\mathcal{R}_{\TA}$ denotes the set of all reductions of $\mathcal{A}$.
We define a partial order relation $\sqsubseteq$ on $\mathcal{R}_{\TA}$ as $\TA_{<D,I>} \sqsubseteq \TA_{<D',I'>}$ iff $D \cup I \subseteq D' \cup I'$. Similarly, we write $\TA_{<D,I>} \sqsubsetneq \TA_{<D',I'>}$ iff $D \cup I \subsetneq D' \cup I'$. We say that a reduction $\TA_{<D,I>}$ is a \emph{sufficient reduction} (w.r.t. $\TA$ and $L_T$) iff $L_T$ is reachable on $\TA_{<D,I>}$; otherwise, $\TA_{<D,I>}$ is an \emph{insufficient reduction}. Crucially, observe that the property of being a sufficient reduction is monotone w.r.t.\ the partial order:

\begin{prop}\label{proposition:monotonicity}
Let $\TA_{<D,I>}$ and $\TA_{<D',I'>}$ be reductions such that $\TA_{<D,I>} \sqsubseteq \TA_{<D',I'>}$. If $\TA_{<D,I>}$ is sufficient then $\TA_{<D',I'>}$ is also sufficient.
\end{prop}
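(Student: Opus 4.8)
The plan is to apply Proposition~\ref{proposition:run-inclusion}, but with $\TA_{<D,I>}$ (rather than $\TA$) in the role of the base automaton, viewing the larger reduction $\TA_{<D',I'>}$ as a reduction of the smaller one. First I would isolate the constraints that distinguish the two reductions: since $\TA_{<D,I>} \sqsubseteq \TA_{<D',I'>}$ gives $D \cup I \subseteq D' \cup I'$, set $D'' = D' \setminus D$ and $I'' = I' \setminus I$. Because every reduction preserves $L$, $l_0$, $C$ and sends each transition of $\TA$ to a single relaxed transition, I identify the locations and transitions of $\TA_{<D,I>}$ with those of $\TA$. Under this identification the simple guard constraints surviving in $\TA_{<D,I>}$ are exactly $\mathcal{S}(\phi) \setminus D|_e$ for each transition $e$, and dually for invariants, so the constraints in $D'' \cup I''$ are precisely the ones still present in $\TA_{<D,I>}$ that are slated for removal.

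The core step is to check that $(\TA_{<D,I>})_{<D'',I''>} = \TA_{<D',I'>}$, which follows by unfolding Definition~\ref{defn:relaxation} twice: removing first $D|_e$ and then $D''|_e = (D' \setminus D)|_e$ from a guard deletes exactly the simple constraints in $D'|_e$ (and symmetrically the invariant constraints in $I'|_l$), matching the single reduction by $D', I'$. Hence $\TA_{<D',I'>}$ is a reduction of $\TA_{<D,I>}$, i.e. a $(D'', I'', \mathbf{r}'')$-relaxation with $\mathbf{r}'' \equiv \infty$. Proposition~\ref{proposition:run-inclusion} then yields $[[\TA_{<D,I>}]] \subseteq [[\TA_{<D',I'>}]]$. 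Finally, sufficiency of $\TA_{<D,I>}$ provides a run $\rho = (l_0, \textbf{0}) {\to}_{d_0} \cdots {\to}_{d_{n-1}} (l_n, v_n) \in [[\TA_{<D,I>}]]$ with $l_n \in L_T$; the inclusion places $\rho \in [[\TA_{<D',I'>}]]$, so $L_T$ is reachable on $\TA_{<D',I'>}$ and this reduction is sufficient.

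The only delicate point, and thus the step I would write most carefully, is the bookkeeping in establishing $(\TA_{<D,I>})_{<D'',I''>} = \TA_{<D',I'>}$: one must justify the identification of simple constraints across the two reductions and confirm that the two-stage removal coincides with the one-stage removal, so that Proposition~\ref{proposition:run-inclusion} genuinely applies with $\TA_{<D,I>}$ as the base automaton. Once that equality is in hand, the conclusion is an immediate consequence of run inclusion, with no further timing analysis required.
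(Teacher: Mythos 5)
Your proof is correct and follows essentially the same route as the paper: the paper's argument likewise observes that $\TA_{<D',I'>}$ is a $(D' \setminus D,\, I' \setminus I)$-reduction of $\TA_{<D,I>}$, invokes Proposition~\ref{proposition:run-inclusion} to get $[[\TA_{<D,I>}]] \subseteq [[\TA_{<D',I'>}]]$, and transfers the witness run. The only difference is that you spell out the bookkeeping identity $(\TA_{<D,I>})_{<D'',I''>} = \TA_{<D',I'>}$, which the paper asserts without proof.
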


\begin{proof}
Note that
$\TA_{<D',I'>}$ is a ($D' \setminus D$,$I' \setminus I$)-reduction of $\TA_{<D,I>}$. By Proposition~\ref{proposition:run-inclusion},
$[[\TA_{<D,I>}]] \subseteq [[\TA_{<D',I'>}]]$, i.e., the run of $\TA_{<D,I>}$ that witnesses the reachability of $L_T$ is also a run of  $\TA_{<D',I'>}$.
\end{proof}

\begin{defi}[MSR]\label{defn:msr}
A sufficient reduction $\TA_{<D,I>}$ is a \emph{minimal sufficient reduction (MSR)} iff there is no
$c \in D \cup I$ such that the reduction $\TA_{<D \setminus \{c\},I  \setminus \{c\}>}$ is sufficient.
Equivalently, due to Proposition~\ref{proposition:monotonicity}, $\TA_{<D,I>}$ is an MSR iff there is no sufficient reduction $\TA_{<D',I'>}$ such that $\TA_{<D',I'>} \sqsubsetneq \TA_{<D,I>}$.
\end{defi}

\begin{defi}[MIR]\label{defn:mir}
An insufficient reduction $\TA_{<D,I>}$ is a \emph{maximal insufficient reduction (MIR)} iff there is no
$c \in (\Psi(\Delta) \cup \Psi(Inv)) \setminus (D \cup I)$ such that the reduction $\TA_{<D', I'>}$ with $D' \cup I' = D \cup I \cup \{c\}$ is insufficient.
Equivalently, due to Proposition~\ref{proposition:monotonicity}, $\TA_{<D,I>}$ is an MIR iff there is no insufficient reduction $\TA_{<D'',I''>}$ such that $\TA_{<D,I>} \sqsubsetneq \TA_{<D'',I''>}$.
\end{defi}

Intuitively, an MSR represents a minimal set of constraints that need to be removed from $\mathcal{A}$ to make the target location(s) $L_T$ reachable, whereas an MIR represents a maximal set of constraints whose removal does not make the target location(s) reachable.

Recall that a reduction $\TA_{<D,I>}$ is determined by $D \subseteq \Psi(\Delta)$ and $I \subseteq \Psi(Inv)$. Consequently, $|\mathcal{R}_{\TA}| = 2^{|\Psi(\Delta) \cup \Psi(Inv)|}$ (i.e., there are exponentially many reductions w.r.t. $|\Psi(\Delta) \cup \Psi(Inv)|$). Moreover, there can be up to $\binom{k}{k/2}$ MSRs (MIRs) where $k = |\Psi(\Delta) \cup \Psi(Inv)|$.\footnote{There are $\binom{k}{k/2}$ pair-wise incomparable elements of $\mathcal{R}_{\TA}$ w.r.t. $\sqsubsetneq$ (see Sperner’s theorem~\cite{sperner1928satz}) and all of them can be MSRs (or MIRs).}
Also note, that the \emph{minimality} (\emph{maximality}) of a reduction does not mean a \emph{minimum} (\emph{maximum}) number of simple clock constraints that are removed by the reduction; there can exist two MSRs (MIRs), $M$ and $N$, such that $|M| < |N|$.
We call an MSR $M$ a \emph{minimum MSR} if there is no MSR $M'$ with $|M'| < |M|$.
Similarly, an MIR $M$ is a \emph{maximum MIR} if there is no MIR $M'$ with $|M'| > |M|$.
Note that there can be also up to $\binom{k}{k/2}$ minimum MSRs and up to $\binom{k}{k/2}$ maximum MIRs.

In some applications, instead of thinking about an MIR of $\mathcal{A}$, i.e.\ a maximal set of simple clock constraints whose removal does not make the target location(s) reachable, it might be more natural to think about the complement of an MIR, i.e, a \emph{minimal set of simple clock constraints} that need to be left in $\mathcal{A}$ to ensure that the target location is still unreachable. We define this complementary notion as a \emph{minimal guarantee}:

\begin{defi}[MG]\label{defn:mg}
Given a reduction $\TA_{<D,I>}$, the set $D \cup I$ of simple clock constraints constitutes a \emph{guarantee} (for $\mathcal{A}$) iff
the reduction $\TA_{<\Psi(\Delta) \setminus D, \Psi(Inv) \setminus I>}$ is insufficient.
Furthermore, a guarantee $D \cup I$ is \emph{a minimal guarantee} (MG) iff
for every $c \in D \cup I$ the reduction $\TA_{<\Psi(\Delta) \setminus (D \setminus \{c\}) , \Psi(Inv) \setminus (I \setminus \{c\})>}$ is sufficient.
Equivalently, due to Proposition~\ref{proposition:monotonicity}, $D \cup I$ is an MG iff there is no guarantee $D' \cup I'$ such that $D' \cup I' \subsetneq D \cup I$.
\end{defi}

\begin{obs}\label{obs:mir-mg-complement}
A reduction $\TA_{<D,I>}$ is insufficient iff the set $(\Psi(\Delta) \cup \Psi(Inv)) \setminus (D \cup I)$ is a guarantee. Furthermore,
$\TA_{<D,I>}$ is an MIR iff $(\Psi(\Delta) \cup \Psi(Inv)) \setminus (D \cup I)$ is an MG\@.
\end{obs}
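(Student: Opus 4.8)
The plan is to prove Observation~\ref{obs:mir-mg-complement} directly from the definitions of guarantee, MG, MIR, and the complementation bookkeeping that relates a reduction to its complement. I would introduce the shorthand $U = \Psi(\Delta) \cup \Psi(Inv)$ for the full set of simple clock constraints, and for a reduction $\TA_{<D,I>}$ write its constraint set as $R = D \cup I \subseteq U$ and its complement as $\overline{R} = U \setminus R$. The key observation to extract first is purely set-theoretic: if we split $\overline{R}$ into its guard part $D^c = \Psi(\Delta) \setminus D$ and invariant part $I^c = \Psi(Inv) \setminus I$, then the reduction that removes exactly the complement, namely $\TA_{<\Psi(\Delta) \setminus D^c,\ \Psi(Inv) \setminus I^c>}$, is by a double-complement computation precisely $\TA_{<D,I>}$ itself. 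This identity is what makes the two definitions line up.

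For the first claim, I would simply unfold Definition~\ref{defn:mg}. By definition, $\overline{R}$ (viewed as the pair $(D^c, I^c)$) is a guarantee iff the reduction $\TA_{<\Psi(\Delta) \setminus D^c,\ \Psi(Inv) \setminus I^c>}$ is insufficient. Applying the double-complement identity above, that reduction is exactly $\TA_{<D,I>}$, so $\overline{R}$ is a guarantee iff $\TA_{<D,I>}$ is insufficient, which is the first statement.

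For the second claim I would combine the first claim with the monotone characterizations already available. A reduction $\TA_{<D,I>}$ is an MIR iff (Definition~\ref{defn:mir}) it is insufficient and there is no insufficient reduction $\TA_{<D'',I''>}$ with $\TA_{<D,I>} \sqsubsetneq \TA_{<D'',I''>}$, i.e.\ with $R \subsetneq D'' \cup I''$. Dually, $\overline{R}$ is an MG iff (Definition~\ref{defn:mg}) it is a guarantee and there is no guarantee strictly contained in it. The bridge is that the complementation map $S \mapsto U \setminus S$ is an inclusion-reversing bijection on subsets of $U$: sets strictly containing $R$ correspond bijectively to guarantees strictly contained in $\overline{R}$, and by the first claim a reduction is insufficient exactly when its complement is a guarantee. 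So ``$\TA_{<D,I>}$ insufficient with no strict insufficient super-reduction'' translates term-by-term into ``$\overline{R}$ a guarantee with no strict guarantee subset'', which is exactly the MG condition.

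The main obstacle is not conceptual depth but notational care: the definitions phrase everything in terms of the \emph{separate} sets $D$ and $I$ rather than their union, so every complement must be taken coordinate-wise and then recombined, and one must verify that $\Psi(\Delta) \setminus (\Psi(\Delta) \setminus D) = D$ (and likewise for $I$) so that the double complement really returns $\TA_{<D,I>}$ and not some mismatched pair. I would therefore be explicit that the guard and invariant components never interact under complementation, so the inclusion-reversing bijection on $2^U$ restricts correctly to the product structure $2^{\Psi(\Delta)} \times 2^{\Psi(Inv)}$, making the strict-inclusion correspondence in the second claim exact.
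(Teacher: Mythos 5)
Your proof is correct and takes essentially the same route as the paper, which states this as an observation with no separate proof precisely because it is immediate from Definitions~\ref{defn:mir} and~\ref{defn:mg} via the coordinate-wise double complement ($\Psi(\Delta) \setminus (\Psi(\Delta) \setminus D) = D$, and likewise for $I$) together with the inclusion-reversing bijection $S \mapsto (\Psi(\Delta) \cup \Psi(Inv)) \setminus S$. Your write-up simply makes that intended unfolding explicit, including the (needed, and valid) point that $\Psi(\Delta)$ and $\Psi(Inv)$ are disjoint so every subset splits uniquely into guard and invariant parts.
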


Note that due to technical reasons, we define the concept of an MG as \emph{a set} $D \cup I$ \emph{of simple clock constraints}, whereas the concept of an MIR is defined as a \emph{reduction}
$\TA_{<D',I'>}$ (i.e., a TA) that is determined by a set $D' \cup I'$ of simple clock constraints.

\begin{exa}\label{ex:msr}
Assume the TA $\TA$ and $L_T = \{l_4\}$ from Example~\ref{ex:running} (Fig.~\ref{fig:ex}).
There are 24 MSRs and 4 of them are minimum. For example, $\TA_{<D,I>}$ with $D=\{ (e_5, x \geq 25) \}$ and $I = \{ (l_3, u \leq 26) \}$ is a minimum MSR, and $\TA_{<D',I'>}$ with $D' = \{(e_9, y \leq 15), (e_7, z \leq 15) \}$ and $I' = \{ (l_6, x \leq 10) \}$
is a non-minimum MSR\@.
There are 40 MGs (and hence MIRs) and 21 of them are minimum. For instance, $D \cup I$ with
$D = \{ (e_5, z \geq 9), (e_8, u \geq 35), (e_1, x \geq 9), (e_4, z \geq 9) \}$ and
$I = \{ (l_5, x \leq 10), (l_6, x \leq 10), (l_0, z \leq 10), (l_3, u \leq 26) \}$
is a non-minimum MG, and $D' \cup I'$ with $D' = \{ (e_7, x \geq 9), (e_9, y \leq 15), (e_8, x \geq 9), (e_5, x \geq 25) \}$ and $I' = \{ (l_1, x \leq 14), (l_3, z \leq 10) \}$ is a minimum MG\@.
\end{exa}

\blue{Finally, note that in some situations, we might not want to include the whole set $\Psi(\Delta) \cup \Psi(Inv)$ of all simple clock constraints in the analysis but rather just its subset (e.g., because some simple clock constraints simply could not be modified). Our definitions of ($D,I,\mathbf{r}$)-relaxations, reductions, and guarantees, can be naturally extended also to work with just a subset of $\Psi(\Delta) \cup \Psi(Inv)$. We illustrate this on a simple example.}

\blue{
\begin{exa}\label{ex:msr2}
Assume that only 4 of the simple clock constraints from the TA in Fig.~\ref{fig:ex} can be removed/relaxed and the other simple clock constraints represent physical limitations that can not be changed. The tunable simple clock constraints are $c_1 = x \geq 9$, $c_2 = z \geq 9$, $c_3 = x \leq 14$ and $c_4 = u \leq 26$ that appear on edge $e_1$, edge $e_5$, location $l_1$, and location $l_3$, respectively. Note that these constraints are highlighted using green color in Fig.~\ref{fig:ex}. If we restrict our analysis only to those four constraints, then there are 2 MSR:\@
$\{c_3, c_4\}$ and $\{c_1, c_2, c_3\}$, and three MGs:
$\{c_3\}$, $\{c_1, c_4\}$ and $\{c_2, c_4\}$. We provide a power-set illustration of this example in Fig.~\ref{fig:ex-powerset}
\end{exa}
}

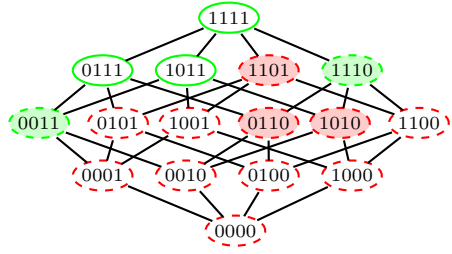
\begin{figure}[t!]
\centering
\begin{tikzpicture}[-,>=stealth',shorten >=1pt,auto,node distance=1.5cm,minimum size=0.4cm,
thick,every node/.style={draw, ellipse,inner sep=0, outer sep=0},
valid/.style={draw,draw=red,dashed},
invalid/.style={draw,draw=green},
mus/.style={draw,draw=green,fill=green!20,dashed},
mss/.style={draw,draw=red,fill=red!20,dashed},
v/.style={fill=green},
i/.style={fill=red}]
\node[valid] (0) []{\tiny 0000};
\node[valid] (01) [above left = 0.44999999999999996 and -2.11 of 0]{\tiny 1000};
\node[valid] (02) [above left = 0.44999999999999996 and -1.005 of 0]{\tiny 0100};
\node[valid] (03) [above left = 0.44999999999999996 and 0.1 of 0]{\tiny 0010};
\node[valid] (04) [above left = 0.44999999999999996 and 1.205 of 0]{\tiny 0001};

\node[valid] (012) [above left = 1.15 and -2.9999999999999996 of 0]{\tiny 1100};
\node[mss] (013) [above left = 1.15 and -1.9499999999999998 of 0]{\tiny 1010};
\node[mss] (023) [above left = 1.15 and -1 of 0]{\tiny 0110};
\node[valid] (014) [above left = 1.15 and 0.05 of 0]{\tiny 1001};
\node[valid] (024) [above left = 1.15 and 0.9999999999999999 of 0]{\tiny 0101};
\node[mus] (034) [above left = 1.15 and 2.0500000000000001 of 0]{\tiny 0011};

\node[mus] (0123) [above left = 1.8499999999999996 and -2.11 of 0]{\tiny 1110};
\node[mss] (0124) [above left = 1.8499999999999996 and -1.005 of 0]{\tiny 1101};
\node[invalid] (0134) [above left = 1.8499999999999996 and 0.1 of 0]{\tiny 1011};
\node[invalid] (0234) [above left = 1.8499999999999996 and 1.205 of 0]{\tiny 0111};
\node[invalid] (01234) [above left = 2.55 and -0.475 of 0]{\tiny 1111};
\path[every node/.style={font=\sffamily\small}]
(0) edge[] node [left] {} (01)
    edge[] node [left] {} (02)
    edge[] node [left] {} (03)
    edge[] node [left] {} (04)
(01) edge[] node [left] {} (012)
    edge[] node [left] {} (013)
    edge[] node [left] {} (014)
(02) edge[] node [left] {} (012)
    edge[] node [left] {} (023)
    edge[] node [left] {} (024)
(03) edge[] node [left] {} (013)
    edge[] node [left] {} (023)
    edge[] node [left] {} (034)
(04) edge[] node [left] {} (014)
    edge[] node [left] {} (024)
    edge[] node [left] {} (034)
(012) edge[] node [left] {} (0123)
    edge[] node [left] {} (0124)
(013) edge[] node [left] {} (0123)
    edge[] node [left] {} (0134)
(023) edge[] node [left] {} (0123)
    edge[] node [left] {} (0234)
(014) edge[] node [left] {} (0124)
    edge[] node [left] {} (0134)
(024) edge[] node [left] {} (0124)
    edge[] node [left] {} (0234)
(034) edge[] node [left] {} (0134)
    edge[] node [left] {} (0234)
(0123) edge[] node [left] {} (01234)
(0124) edge[] node [left] {} (01234)
(0134) edge[] node [left] {} (01234)
(0234) edge[] node [left] {} (01234)
;
\end{tikzpicture}
\caption{\blue{An illustration of the set of all TA reductions from  Example~\ref{ex:msr2}. We denote individual reductions of $\TA$ using a bit-vector representation; for instance, $1101$ represents the reduction $\TA_{<D,I>}$ where $D \cup I = \{c_1, c_2, c_4\}$.
The reductions with a red dashed border are the insufficient reductions, and the reductions with solid green border are sufficient reductions. The MRSes and MIRs are filled with a background color.}}%
\label{fig:ex-powerset}
\end{figure}

\subsection{Problem Formulations}\label{sec:problem-formulation}
In this paper, we are mainly concerned with the following two problems. \blue{The first problem and the proposed solution were presented in our conference paper~\cite{DBLP:conf/tacas/BendikSGC21}.}

\begin{prob}\label{prob:msr-relax}
Given a TA $\TA = (L, l_0, C, \Delta, Inv )$ and a set of target locations $L_T \subset L$ that is unreachable on $\TA$,
find a \emph{minimal} ($D,I,\mathbf{r}$)-relaxation $\TA_{<D,I,\mathbf{r}>}$ of $\TA$ such that $L_T$ is reachable on $\TA_{<D,I,\mathbf{r}>}$. In particular, the goal is to identify a ($D,I,\mathbf{r}$)-relaxation that minimizes the number $|D \cup I|$ of relaxed constraints, and, secondly, we tend to minimize
the overall change of the clock constraints $\sum_{c \in D \cup I} \mathbf{r}(c)$.
\end{prob}

Our solution to Problem~\ref{prob:msr-relax} is described in detail in Sections~\ref{sec:msr} and~\ref{sec:msr-relax}.
Briefly, we
solve Problem~\ref{prob:msr-relax} in two steps.
First,  we identify a minimum MSR $\TA_{<D,I>}$ for $\mathcal{A}$, i.e., a minimal set $D \cup I$ of simple clock constraints whose removal from $\TA$ makes the target locations $L_T$ reachable. Second, instead of completely removing the constraints, we
turn the MSR $\TA_{<D,I>}$ into the resultant ($D,I,\mathbf{r}$)-relaxation. To construct the ($D,I,\mathbf{r}$)-relaxation, we propose two alternative approaches: (1) an approach based on Mixed Integer Linear Programming (MILP) and (2) an approach based on parameter synthesis for PTA\@.

\begin{prob}\label{prob:mg-relax}
Given a TA $\TA = (L, l_0, C, \Delta, Inv )$ and a set of target locations $L_T \subset L$ that is unreachable on $\TA$,
find a \emph{maximal} ($D,I,\mathbf{r}$)-relaxation $\TA_{<D,I,\mathbf{r}>}$ of $\TA$ such that $L_T$ is still unreachable on $\TA_{<D,I,\mathbf{r}>}$.
In particular, the goal is to identify a ($D,I,\mathbf{r}$)-relaxation that maximizes the number
$|\{ c \in D \cup I\, |\, \mathbf{r}(c) = \infty \}|$
of constraints that are completely removed, and, secondary, maximizes
the overall change of the clock constraints $\sum_{c \in \{ c' \in D \cup I\, |\, \mathbf{r}(c') \neq \infty \}} \mathbf{r}(c)$ that are not completely removed.
\end{prob}

Our solution to Problem~\ref{prob:mg-relax} is presented in Sections~\ref{sec:mg} and~\ref{sec:mg-relax}. Briefly, we first identify a minimum MG $D' \cup I'$ for $\mathcal{A}$, i.e., a minimal set $D' \cup I'$ of simple clock constraints that need to be left in $\TA$ to ensure that the target location $L_T$ is still unreachable. Subsequently, we employ parameter synthesis to further relax (as much as possible) the constraints $D' \cup I'$ that are left in the system. \blue{For both of the considered problems, we assume that there is a path from the initial state to the target set $L_T$ (unrealizable since $L_T$ is not reachable). Thus, the target set can become reachable via constraint removals/relaxations. }

\section{Finding Minimal Sufficient Reductions}\label{sec:msr}

\begin{algorithm}[t!]
\DontPrintSemicolon

$N \gets \TA_{<\Psi(\Delta),\Psi(Inv)>};\, \mathcal{I} \gets \emptyset;\, \mathcal{S} \gets \emptyset $\;
\While{$N \neq \mathtt{null}$}{
	$M, \mathcal{I} \gets \shrink(N,\mathcal{I})$ \tcp*{Algorithm~\ref{alg:single-msr-extraction}}
	$\mathcal{S} \gets \mathcal{S} \cup \{ X\, |\, M \sqsubseteq X\}$\;
	$\mathcal{I} \gets \mathcal{I} \cup \{ Y\, |\, Y \sqsubsetneq M\}$\;
	$N, \mathcal{I}, \mathcal{S} \gets \findSSeed(M, \mathcal{M}_{msr}, \mathcal{I}, \mathcal{S})$ \tcp*{Algorithm~\ref{alg:find-seed}}
}
\Return $M$\;

\caption{Minimum MSR Extraction Scheme}%
\label{alg:msr-scheme}
\end{algorithm}

In this section, we gradually describe our approach for finding a minimum minimal sufficient reduction.

\subsection{Base Scheme For Computing a Minimum MSR}

Algorithm~\ref{alg:msr-scheme} shows a high-level scheme of our approach for computing a minimum MSR\@.
The algorithm iteratively identifies an ordered set of MSRs, $|M_1| > |M_2| > \cdots > |M_k|$, such that the last MSR $M_k$ is a minimum MSR\@.
Each of the MSRs, say $M_i$, is identified in two steps. First, the algorithm finds an \emph{s-seed}\footnote{\blue{Note that the initial ``s'' in ``s-seed'' stands for ``sufficient''. Later, in Section~\ref{sec:mg}, we dually introduce ``i-seeds'' as ``insufficient'' reductions.}},
i.e., a reduction $N_i$ such that $N_i$ is sufficient and $|N_i| < |M_{i-1}|$. Second, the algorithm \emph{shrinks} $N_i$ into an MSR $M_i$ such that $M_i \sqsubseteq N_i$ (and thus $|M_i| \leq |N_i|$). The initial s-seed $N_1$ is $\TA_{<\Psi(\Delta),\Psi(Inv)>}$, i.e., the reduction that removes all simple clock constraints (which makes all locations of $\TA$ trivially reachable).
Once there is no sufficient reduction $N_i$ with $|N_i| < |M_{i-1}|$, we know that $M_{i - 1} = M_{k}$ is a minimum MSR\@.

Note that the algorithm also maintains two auxiliary sets, $\mathcal{I}$ and $\mathcal{S}$, to store all identified insufficient and sufficient reductions, respectively. In particular, whenever we identify a new MSR $M_i$, we add every reduction $X$ such that $M_i \sqsubseteq X$ to $\mathcal{S}$ since, by Proposition~\ref{proposition:monotonicity}, every such $X$ is sufficient. Dually, since $M_i$ is an MSR, then every reduction $Y$ such that $Y \sqsubsetneq M_i$ is necessarily insufficient, and hence we add it to $\mathcal{I}$.
The sets  $\mathcal{I}$ and $\mathcal{S}$ are used during the process of finding and shrinking an s-seed which we describe below.

\subsection{Shrinking an S-Seed}
Our approach for shrinking an s-seed $N$ into an MSR $M$ is based on two concepts: a \emph{critical simple clock constraint} and a \emph{reduction core}.

\begin{defi}[critical constraint]\label{defn:critical-constraint}
Given a sufficient reduction $\TA_{<D,I>}$, a simple clock constraint $c$ is \emph{critical} for $\TA_{<D,I>}$ iff $\TA_{<D\setminus \{c\},I\setminus \{c\}>}$ is insufficient.
\end{defi}

\begin{prop}\label{prop:critical-constraint}
If $c \in D \cup I$ is critical for a sufficient reduction  $\TA_{<D,I>}$
then $c$ is critical for every sufficient reduction $\TA_{<D',I'>}$ such that $\TA_{<D',I'>} \sqsubseteq \TA_{<D,I>}$. Moreover,
 by Definitions~\ref{defn:msr} and~\ref{defn:critical-constraint}, $\TA_{<D,I>}$ is an MSR iff every $c \in D \cup I$ is \emph{critical} for~$\TA_{<D,I>}$.
\end{prop}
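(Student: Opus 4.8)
The statement has two parts, so I would prove each in turn.

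\medskip

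\textbf{First claim (propagation of criticality downward).} The plan is to argue directly from the definitions. Suppose $c \in D \cup I$ is critical for the sufficient reduction $\TA_{<D,I>}$, and let $\TA_{<D',I'>} \sqsubseteq \TA_{<D,I>}$ be an arbitrary sufficient reduction. By Definition~\ref{def:reduction} and the partial order, $D' \cup I' \subseteq D \cup I$. I must show that removing $c$ from $\TA_{<D',I'>}$ yields an insufficient reduction, i.e., that $\TA_{<D' \setminus \{c\}, I' \setminus \{c\}>}$ is insufficient. The key observation is that the reduction $\TA_{<D' \setminus \{c\}, I' \setminus \{c\}>}$ sits below $\TA_{<D \setminus \{c\}, I \setminus \{c\}>}$ in the partial order: indeed, $(D' \setminus \{c\}) \cup (I' \setminus \{c\}) \subseteq (D \cup I) \setminus \{c\}$, so $\TA_{<D' \setminus \{c\}, I' \setminus \{c\}>} \sqsubseteq \TA_{<D \setminus \{c\}, I \setminus \{c\}>}$. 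Since $c$ is critical for $\TA_{<D,I>}$, the larger reduction $\TA_{<D \setminus \{c\}, I \setminus \{c\}>}$ is insufficient. Now I invoke the contrapositive of Proposition~\ref{proposition:monotonicity}: if a reduction below an insufficient reduction were sufficient, then by monotonicity the upper reduction would be sufficient too, a contradiction. Hence $\TA_{<D' \setminus \{c\}, I' \setminus \{c\}>}$ is insufficient, which is exactly what it means for $c$ to be critical for $\TA_{<D',I'>}$.

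\medskip

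\textbf{Second claim (MSR characterization via criticality).} This is an ``iff'' that I would unwind straight from Definitions~\ref{defn:msr} and~\ref{defn:critical-constraint}. For the forward direction, suppose $\TA_{<D,I>}$ is an MSR. By Definition~\ref{defn:msr}, for every $c \in D \cup I$ the reduction $\TA_{<D \setminus \{c\}, I \setminus \{c\}>}$ fails to be sufficient, i.e., is insufficient; but that is precisely the defining condition (Definition~\ref{defn:critical-constraint}) for $c$ to be critical for $\TA_{<D,I>}$. For the converse, suppose every $c \in D \cup I$ is critical. Then for each such $c$, removing it gives an insufficient reduction, so no single-element reduction of $D \cup I$ is sufficient; by Definition~\ref{defn:msr} (and the fact that $\TA_{<D,I>}$ is itself sufficient, which is part of the hypothesis of being critical), $\TA_{<D,I>}$ is an MSR. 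Note both directions rely on the implicit standing assumption that $\TA_{<D,I>}$ is a sufficient reduction, which is baked into Definition~\ref{defn:critical-constraint}.

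\medskip

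I do not expect any genuine obstacle here; both parts are short definitional arguments. The only point requiring slight care is in the first claim, namely correctly tracking the set inclusions through the removal of $c$ and applying Proposition~\ref{proposition:monotonicity} in its contrapositive form rather than its stated form. Everything else is bookkeeping with the partial order $\sqsubseteq$.
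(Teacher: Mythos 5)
Your proof is correct and takes essentially the same route as the paper: for the first claim the paper argues by contradiction (assume $c$ critical for $\TA_{<D,I>}$ but not for $\TA_{<D',I'>}$, note $\TA_{<D'\setminus\{c\},I'\setminus\{c\}>} \sqsubseteq \TA_{<D\setminus\{c\},I\setminus\{c\}>}$, and apply Proposition~\ref{proposition:monotonicity}), which is just the contradiction-phrased version of your contrapositive argument, with the same key set inclusion. The second claim is treated by the paper as an immediate unwinding of Definitions~\ref{defn:msr} and~\ref{defn:critical-constraint}, exactly as you spell it out, so there is nothing missing.
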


\begin{proof}
By contradiction, assume that $c$ is critical for $\TA_{<D,I>}$ but not for $\TA_{<D',I'>}$,
 i.e., $\TA_{<D\setminus \{c\},I \setminus \{c\}>}$ is insufficient and $\TA_{<D' \setminus \{c\},I' \setminus \{c\}>}$ is sufficient.
As $\TA_{<D',I'>} \sqsubseteq \TA_{<D,I>}$, we have $\TA_{<D' \setminus \{c\} ,I' \setminus \{c\}>} \sqsubseteq \TA_{<D \setminus \{c\},I \setminus \{c\}>}$. By Proposition~\ref{proposition:monotonicity}, if the reduction $\TA_{<D' \setminus \{c\},I' \setminus \{c\}>}$ is sufficient then $\TA_{<D\setminus \{c\},I \setminus \{c\}>}$ is also sufficient.
\end{proof}

\begin{algorithm}[t!]
\DontPrintSemicolon

$X \gets \emptyset$\;
\While{$(D \cup I) \neq X$}{
	$c \gets $ pick a simple clock constraint from $(D \cup I) \setminus X$\;
	\eIf{$\TA_{<D \setminus \{c\},I \setminus \{c\}>} \not\in \mathcal{I}$ \textbf{and} $\TA_{<D \setminus \{c\},I \setminus \{c\}>}$ is sufficient }{\label{alg:lazy_if}
		$\rho \gets $ a witness run of the sufficiency of $\TA_{<D \setminus \{c\},I \setminus \{c\}>}$\;
		$\TA_{<D,I>} \gets $ the reduction core of $\TA_{<D \setminus \{c\},I \setminus \{c\}>}$ w.r.t. $\rho$\;
	}{
		$X \gets X \cup \{c\}$\;
		$\mathcal{I} \gets \mathcal{I} \cup \{N \in \mathcal{R}_{\TA}\, |\, N \sqsubseteq \TA_{<D \setminus \{c\},I \setminus \{c\}>} \}$	\;
	}
}
\Return $\TA_{<D,I>},\mathcal{I}$\;

\caption{$\shrink{(\TA_{<D,I>},\mathcal{I})}$}%
\label{alg:single-msr-extraction}
\end{algorithm}

\begin{defi}[reduction core]\label{defn:reduction-core}
Let $\TA_{<D,I>}$ be a sufficient reduction,
$\rho$ a witness run of the sufficiency (i.e., reachability of $L_T$ on $\TA_{<D,I>}$), and $\pi$ the path corresponding to $\rho$.
Furthermore, let $\pi' = l_0,e_1,\ldots,e_{n},l_n$ be the path corresponding to $\pi$ on the original TA $\TA$ \blue{(i.e., $\pi = M(\pi')$~\eqref{eq:path_map})}.
The \emph{reduction core} of $\TA_{<D,I>}$ w.r.t. $\rho$ is the reduction $A_{<D',I'>}$ where $D' = \{ (e,\varphi)\, |\, (e,\varphi) \in D \wedge e = e_i$ for some $1 \leq i \leq n\}$ and $I' = \{ (l,\varphi)\, |\, (l,\varphi) \in I \wedge l = l_i$ for some $0 \leq l \leq n\}$.
\end{defi}

Intuitively, the reduction core of $\TA_{<D,I>}$ w.r.t. $\rho$ removes from $\TA$ only the simple clock constraints that appear on the witness path.

\begin{prop}
Let $\TA_{<D,I>}$ be a sufficient reduction, $\rho$ the witness of reachability of $L_T$ on $\TA_{<D,I>}$, and $\TA_{<D',I'>}$ the reduction core of $\TA_{<D,I>}$ w.r.t. $\rho$. Then $\TA_{<D',I'>}$ is a sufficient reduction and $\TA_{<D',I'>} \sqsubseteq \TA_{<D,I>}$.
\end{prop}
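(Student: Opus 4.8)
The plan is to prove the two claims separately, disposing of the inclusion $\TA_{<D',I'>} \sqsubseteq \TA_{<D,I>}$ first since it is immediate, and then establishing sufficiency by reusing the very same witness run $\rho$. For the inclusion I would simply unfold the definition of the reduction core: by construction $D' = \{(e,\varphi)\in D \mid e = e_i \text{ for some } 1 \leq i \leq n\} \subseteq D$ and likewise $I' \subseteq I$, hence $D' \cup I' \subseteq D \cup I$, which is exactly $\TA_{<D',I'>} \sqsubseteq \TA_{<D,I>}$ by the definition of $\sqsubseteq$.

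For sufficiency, the key idea is that the witness run $\rho$ reaching $L_T$ on $\TA_{<D,I>}$ is itself a run of the reduction core $\TA_{<D',I'>}$. First I would observe that along the path $\pi' = l_0,e_1,\ldots,e_n,l_n$ realized by $\rho$, the reduction core deletes exactly the same simple constraints as $\TA_{<D,I>}$: for each on-path transition $e_i$ we have $D'|_{e_i} = \{\varphi \mid (e_i,\varphi)\in D'\} = \{\varphi \mid (e_i,\varphi)\in D\} = D|_{e_i}$ directly from the definition of $D'$, and analogously $I'|_{l_i} = I|_{l_i}$ for each on-path location $l_i$. Consequently the guard of $e_i$ in $\TA_{<D',I'>}$ — obtained from $\phi_i$ by deleting the simple constraints in $D'|_{e_i}$ — is identical to its guard in $\TA_{<D,I>}$, and the same holds for the invariants of $l_0,\ldots,l_n$. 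Since $\rho$ satisfies these guards and invariants in $\TA_{<D,I>}$ under some delay sequence $d_0,d_1,\ldots$, I would conclude that the identical path is realizable in $\TA_{<D',I'>}$ via the same delay sequence, so $\rho \in [[\TA_{<D',I'>}]]$ and reaches $l_n \in L_T$; hence $\TA_{<D',I'>}$ is sufficient.

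The main obstacle is to argue that the constraints which $\TA_{<D',I'>}$ does \emph{not} delete — namely the off-path simple constraints in $(D\cup I)\setminus(D'\cup I')$, which are reinstated relative to $\TA_{<D,I>}$ and therefore make off-path guards and invariants strictly tighter — are irrelevant to $\rho$. This becomes clean once I note that a run is entirely determined by the path it realizes together with its delay sequence: $\rho$ never fires an off-path transition nor dwells in an off-path location, so the reinstated off-path constraints are simply never evaluated along $\rho$ and cannot invalidate it. I would make this precise by invoking the realizability conditions used in the proof of Proposition~\ref{proposition:run-inclusion} (the clock-valuation requirements $v_{i-1}+d_{i-1} \models \phi_i$, $v_i \models Inv(l_i)$, etc.), instantiated only at the on-path indices, where the two reductions agree by the observation above; no condition outside the path is ever imposed by $\rho$.
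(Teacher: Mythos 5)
Your proposal is correct and follows essentially the same route as the paper's proof: the inclusion is read off from $D' \subseteq D$, $I' \subseteq I$, and sufficiency is obtained by observing that the witness path is also a path of the reduction core and that realizability depends only on the constraints along the path. In fact, you make rigorous what the paper only sketches, by noting explicitly that $D'|_{e_i} = D|_{e_i}$ and $I'|_{l_i} = I|_{l_i}$ at every on-path transition and location (so guards and invariants along the path coincide in both reductions, and the reinstated off-path constraints are never evaluated by $\rho$), which is a welcome tightening of the argument.
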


\begin{proof}
By Definition~\ref{defn:reduction-core}, $D' \subseteq D$ and $I' \subseteq I$, thus $\TA_{<D',I'>} \sqsubseteq \TA_{<D,I>}$.
As for the sufficiency of $\TA_{<D',I'>}$, we only sketch the proof.
Intuitively, both $\TA_{<D,I>}$ and $\TA_{<D',I'>}$ originate from $\TA$ by only removing some simple clock constraints ($D \cup I$, and $D' \cup I'$, respectively), i.e., the graph structure of $\TA_{<D,I>}$ and $\TA_{<D',I'>}$ is the same, however, some corresponding paths of $\TA_{<D,I>}$ and $\TA_{<D',I'>}$ differ in the constraints that appear on the paths. By Definition~\ref{defn:reduction-core}, the path $\pi$ that corresponds to the witness run $\rho$ of $\TA_{<D,I>}$ is also a path of $\TA_{<D',I'>}$. Since realizability of a path depends only on the constraints along  the path, if $\pi$ is realizable on $\TA_{<D,I>}$ then $\pi$ is also realizable on $\TA_{<D',I'>}$.
\end{proof}

Our approach for shrinking a sufficient reduction $N$
is shown in Algorithm~\ref{alg:single-msr-extraction}. The algorithm iteratively maintains a sufficient reduction $\TA_{<D,I>}$ and a set $X$ of known critical constraints for $\TA_{<D,I>}$. Initially, $\TA_{<D,I>} = N$ and $X = \emptyset$. In each iteration, the algorithm picks a simple clock constraint $c \in (D \cup I) \setminus X$ and checks the reduction $\TA_{<D \setminus \{c\},I \setminus \{c\}>}$ for sufficiency. If $\TA_{<D \setminus \{c\},I \setminus \{c\}>}$ is insufficient, the algorithm adds $c$ to $X$. Otherwise, if $\TA_{<D \setminus \{c\},I \setminus \{c\}>}$ is sufficient, the algorithm obtains a witness run $\rho$ of the sufficiency from the verifier and reduces $\TA_{<D, I>}$ to the corresponding reduction core. The algorithm terminates when $(D \cup I) = X$. An invariant of the algorithm is that
every $c \in X$ is critical for $\TA_{<D,I>}$.
Thus, when $(D \cup I) = X$,  $\TA_{<D, I>}$ is an MSR (Proposition~\ref{prop:critical-constraint}).

Note that the algorithm also uses the set $\mathcal{I}$ of known insufficient reductions.
In particular, before calling a verifier to check a reduction for sufficiency (line~\ref{alg:lazy_if}), the algorithm first checks (in a lazy manner) whether the reduction is already known to be insufficient. Also, whenever the algorithm determines a reduction $\TA_{<D \setminus \{c\},I \setminus \{c\}>}$ to be insufficient, it adds $\TA_{<D \setminus \{c\},I \setminus \{c\}>}$ and every $N$, $N \sqsubseteq \TA_{<D \setminus \{c\},I \setminus \{c\}>}$, to $\mathcal{I}$ (by Proposition~\ref{proposition:monotonicity}, every such $N$ is also insufficient).

Finally, note that the algorithm does not add any reduction to the set $\mathcal{S}$ even though it can identify some sufficient reductions during its computation. The reason is that every such identified reduction is larger (w.r.t. $\sqsubseteq$) than the resultant MSR, and hence all these sufficient reductions are added to $\mathcal{S}$ in the main procedure (Algorithm~\ref{alg:msr-scheme}) after the shrinking.

\subsection{Finding an S-Seed}
We now describe the procedure {\findSSeed} that, given the latest identified MSR $M$, identifies an s-seed, i.e., a sufficient
reduction $N$ such that $|N| < |M|$, or returns \texttt{null} if there is no s-seed. Let us denote by \texttt{CAND} the set of all \emph{candidates} on an s-seed, i.e., $\mathtt{CAND} = \{N \in \mathcal{R}_{\TA}\, |\, |N| < |M| \}$.
A brute-force approach would be to check individual reductions in $\mathtt{CAND}$ for sufficiency until a sufficient one is found, however, this can be practically intractable since $|\mathtt{CAND}| = \sum_{i=1}^{|M|}\binom{|\Psi(\Delta) \cup \Psi(Inv)|}{i - 1}$.

We provide two observations to prune the set \texttt{CAND} of candidates that need to be tested for being an s-seed. The first observation exploits the set $\mathcal{I}$ of already known insufficient reductions: no $N \in \mathcal{I}$ can be an s-seed. The second observation is stated below:

\begin{obs}\label{obs:sseed-max}
For every sufficient reduction $N \in \mathtt{CAND}$ there exists a sufficient reduction $N' \in \mathtt{CAND}$ such that $N \sqsubseteq N'$ and $|N'| = |M|-1$.
\end{obs}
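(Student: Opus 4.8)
\textbf{Proof proposal for Observation~\ref{obs:sseed-max}.}

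The plan is to argue by a simple monotonicity-plus-counting argument built on top of Proposition~\ref{proposition:monotonicity}. Let $N \in \mathtt{CAND}$ be an arbitrary sufficient reduction, so $|N| < |M|$. If it already happens that $|N| = |M| - 1$, then we may take $N' = N$ and there is nothing to prove. The interesting case is $|N| < |M| - 1$, where I must enlarge $N$ to obtain a sufficient reduction of the prescribed size $|M| - 1$ while staying within $\mathtt{CAND}$.

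The key observation is that enlarging a reduction (adding more simple clock constraints to the removed set $D \cup I$) preserves sufficiency by Proposition~\ref{proposition:monotonicity}. Concretely, write $N = \TA_{<D,I>}$ with $|D \cup I| = |N|$. Since $N \in \mathcal{R}_{\TA}$, the full constraint set $\Psi(\Delta) \cup \Psi(Inv)$ strictly contains $D \cup I$ (as $|N| < |M| \leq |\Psi(\Delta) \cup \Psi(Inv)|$), so there are $|\Psi(\Delta) \cup \Psi(Inv)| - |N|$ constraints available to add. Because $|N| \leq |M| - 2$, we have $|M| - 1 - |N| \geq 1$ additional constraints to remove, and this quantity is at most the number of available constraints. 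Hence I can pick any set $Z \subseteq (\Psi(\Delta) \cup \Psi(Inv)) \setminus (D \cup I)$ with $|Z| = |M| - 1 - |N|$, split it into its transition part $Z_\Delta$ and invariant part $Z_{Inv}$, and define $N' = \TA_{<D \cup Z_\Delta,\, I \cup Z_{Inv}>}$. Then $|N'| = |N| + |Z| = |M| - 1$, so $N' \in \mathtt{CAND}$, and by construction $N \sqsubseteq N'$. Finally, Proposition~\ref{proposition:monotonicity} applied to $N \sqsubseteq N'$ guarantees that $N'$ is sufficient since $N$ is.

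The argument is essentially bookkeeping, so I do not expect a genuine obstacle; the one point that needs care is the cardinality check ensuring the required number of extra constraints actually exists. This reduces to the inequality $|M| - 1 \leq |\Psi(\Delta) \cup \Psi(Inv)|$, which holds because $M$ is itself a reduction of $\TA$ and therefore $|M| \leq |\Psi(\Delta) \cup \Psi(Inv)|$. I would state this sizing remark explicitly so that the choice of $Z$ is well defined, and otherwise the claim follows directly from monotonicity.
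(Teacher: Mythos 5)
Your proof is correct and takes essentially the same route as the paper's: pad $N$ up to a sufficient reduction of size exactly $|M|-1$ and invoke Proposition~\ref{proposition:monotonicity}. The only cosmetic difference is that the paper draws the padding constraints specifically from $(D^M \cup I^M) \setminus (D^N \cup I^N)$, which makes their availability automatic since $|M| > |N|$, whereas you draw them from the full complement $(\Psi(\Delta) \cup \Psi(Inv)) \setminus (D \cup I)$ and therefore rightly verify the cardinality bound $|M|-1 \leq |\Psi(\Delta) \cup \Psi(Inv)|$ explicitly --- both choices are valid.
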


\begin{proof}
If $|N| = |M|-1$, then $N = N'$. For the other case, when $|N| < |M|-1$, let $N = \TA_{<D^N,I^N>}$ and $M = \TA_{<D^M,I^M>}$.
We construct $N' = \TA_{<D^{N'},I^{N'}>}$ by adding arbitrary $(|M| - |N|) - 1$ simple clock constraint from $(D^M \cup I^M) \setminus (D^N \cup I^N)$ to $(D^N \cup I^N)$, i.e., $D^N \cup I^N \subseteq D^{N'} \cup I^{N'} \subseteq (D^M \cup I^M \cup D^N \cup I^N)$ and $|D^{N'} \cup I^{N'}| = |M|-1$.
By definition of \texttt{CAND}, $N' \in \mathtt{CAND}$. Moreover, since $N \sqsubsetneq N'$ and $N$ is sufficient, then $N'$ is also sufficient (Proposition~\ref{proposition:monotonicity}).
\end{proof}

\begin{algorithm}[t]
\DontPrintSemicolon

\While{$ \{ N \in \mathcal{R}_{\TA}\, |\, N \not\in \mathcal{I} \wedge  |N|= |M| -1\} \neq \emptyset$}{
	$N \gets $ pick from $\{ N \in \mathcal{R}_{\TA}\, |\, N \not\in \mathcal{I} \wedge |N|= |M| -1\}$\;
	\lIf{$N$ is sufficient}{
		\Return $N,\mathcal{I},\mathcal{S}$
	}
	\Else
	{
			$E, \mathcal{S} \gets \enlarge(N, \mathcal{S})$ \tcp*{Algorithm~\ref{alg:single-mir-extraction}}
			$\mathcal{I} \gets \mathcal{I} \cup \{N' \in \mathcal{R}_{\TA}\, |\, N' \sqsubseteq E \}$\;
	}
}
\Return $\mathtt{null},\mathcal{I}, \mathcal{S}$

\caption{$\findSSeed(M, \mathcal{I}, \mathcal{S})$}%
\label{alg:find-seed}
\end{algorithm}

Based on the above observations, we build a set $\mathcal{C}_s$ of indispensable candidates on s-seeds that need to be tested for sufficiency:
\begin{equation}\label{eq:seed_candidates}
\mathcal{C}_s = \{ N \in \mathcal{R}_{\TA}\, |\, N \not\in \mathcal{I} \wedge |N|= |M| -1\}
\end{equation}

The procedure {\findSSeed}, shown in Algorithm~\ref{alg:find-seed}, in each iteration picks a reduction $N \in \mathcal{C}_s$ and checks it for sufficiency (via the verifier). If $N$ is sufficient, {\findSSeed} returns $N$ as the s-seed.
Otherwise, when $N$ is insufficient, the algorithm first \emph{enlarges} $N$ into a maximal insufficient reduction (MIR) $E$ such that $N \sqsubseteq E$. By Proposition~\ref{proposition:monotonicity}, every reduction $N'$ such that $N' \sqsubseteq E$ is also insufficient, thus all these reductions are subsequently added to $\mathcal{I}$ and hence  removed from $\mathcal{C}_s$ (note that this includes also $N$).
If $\mathcal{C}_s$ becomes empty, then there is no  s-seed.

The purpose of \emph{enlarging} $N$ into $E$ is to quickly prune the candidate set $\mathcal{C}_s$. We could just add all the insufficient reductions $\{N'\, |\, N' \sqsubseteq N\}$ to $\mathcal{I}$, but note that
$|\{N'\, |\, N' \sqsubseteq E\}|$ is exponentially larger than $|\{N'\, |\, N' \sqsubseteq N\}|$ w.r.t. $|E| - |N|$.
The enlargement of $N$ into an MIR $E$ is carried out via Algorithm~\ref{alg:single-mir-extraction} and it is described later on in Section~\ref{sec:mg}. Note that Algorithm~\ref{alg:single-mir-extraction} exploits and updates the set $\mathcal{S}$ of already known sufficient reductions.

Finally, let us note that we need to somehow efficiently represent and maintain the sets $\mathcal{I}$, $\mathcal{S}$ and $\mathcal{C}_s$. In particular, we need to be able to add elements to these sets and obtain elements from these sets. The problem is that there can be up to exponentially many reductions w.r.t. $|\Psi(\Delta) \cup \Psi(Inv)|$, and hence these sets can be also exponentially large and cannot be stored explicitly. In Section~\ref{sec:representation}, we describe how we efficiently maintain these sets.

\subsection{Example Execution}

\tikzstyle{uvalid}=[draw,dashed]
\tikzstyle{valid}=[draw=red,dashed,fill=red!20]
\tikzstyle{uinvalid}=[draw]
\tikzstyle{invalid}=[draw=green,fill=green!20]
\tikzstyle{maximal}=[draw=blue,fill=blue!20]

\begin{figure}[ht]
\begin{minipage}{0.49\textwidth}
\centering
\begin{tikzpicture}[-,>=stealth',shorten >=1pt,auto,node distance=1.5cm,minimum size=0.4cm,
thick,every node/.style={draw, ellipse,inner sep=0, outer sep=0},
mus/.style={draw,draw=green,fill=green!20,dashed},
mss/.style={draw,draw=red,fill=red!20,dashed},
v/.style={fill=green},
i/.style={fill=red}]
\node[valid] (0) []{\tiny 0000};
\node[uvalid,maximal] (01) [above left = 0.44999999999999996 and -2.11 of 0]{\tiny 1000};
\node[uvalid,maximal] (02) [above left = 0.44999999999999996 and -1.005 of 0]{\tiny 0100};
\node[valid] (03) [above left = 0.44999999999999996 and 0.1 of 0]{\tiny 0010};
\node[valid] (04) [above left = 0.44999999999999996 and 1.205 of 0]{\tiny 0001};

\node[uvalid] (012) [above left = 1.15 and -2.9999999999999996 of 0]{\tiny 1100};
\node[uvalid] (013) [above left = 1.15 and -1.9499999999999998 of 0]{\tiny 1010};
\node[uvalid] (023) [above left = 1.15 and -1 of 0]{\tiny 0110};
\node[uvalid] (014) [above left = 1.15 and 0.05 of 0]{\tiny 1001};
\node[uvalid] (024) [above left = 1.15 and 0.9999999999999999 of 0]{\tiny 0101};
\node[mus] (034) [above left = 1.15 and 2.0500000000000001 of 0]{\tiny 0011};

\node[uinvalid] (0123) [above left = 1.8499999999999996 and -2.11 of 0]{\tiny 1110};
\node[uvalid] (0124) [above left = 1.8499999999999996 and -1.005 of 0]{\tiny 1101};
\node[invalid] (0134) [above left = 1.8499999999999996 and 0.1 of 0]{\tiny 1011};
\node[invalid] (0234) [above left = 1.8499999999999996 and 1.205 of 0]{\tiny 0111};
\node[invalid] (01234) [above left = 2.55 and -0.475 of 0]{\tiny 1111};
\path[every node/.style={font=\sffamily\small}]
(0) edge[] node [left] {} (01)
    edge[] node [left] {} (02)
    edge[] node [left] {} (03)
    edge[] node [left] {} (04)
(01) edge[] node [left] {} (012)
    edge[] node [left] {} (013)
    edge[] node [left] {} (014)
(02) edge[] node [left] {} (012)
    edge[] node [left] {} (023)
    edge[] node [left] {} (024)
(03) edge[] node [left] {} (013)
    edge[] node [left] {} (023)
    edge[] node [left] {} (034)
(04) edge[] node [left] {} (014)
    edge[] node [left] {} (024)
    edge[] node [left] {} (034)
(012) edge[] node [left] {} (0123)
    edge[] node [left] {} (0124)
(013) edge[] node [left] {} (0123)
    edge[] node [left] {} (0134)
(023) edge[] node [left] {} (0123)
    edge[] node [left] {} (0234)
(014) edge[] node [left] {} (0124)
    edge[] node [left] {} (0134)
(024) edge[] node [left] {} (0124)
    edge[] node [left] {} (0234)
(034) edge[] node [left] {} (0134)
    edge[] node [left] {} (0234)
(0123) edge[] node [left] {} (01234)
(0124) edge[] node [left] {} (01234)
(0134) edge[] node [left] {} (01234)
(0234) edge[] node [left] {} (01234)
;
\end{tikzpicture}
\caption{The situation before the first call of $\findSSeed$.}%
\label{ex:execution2}
\end{minipage}%
\hfill
\begin{minipage}{0.49\textwidth}
\centering
\begin{tikzpicture}[-,>=stealth',shorten >=1pt,auto,node distance=1.5cm,minimum size=0.4cm,
thick,every node/.style={draw, ellipse,inner sep=0, outer sep=0},
mus/.style={draw,draw=green,fill=green!20,dashed},
mss/.style={draw,draw=red,fill=red!20,dashed},
v/.style={fill=green},
i/.style={fill=red}]
\node[valid] (0) []{\tiny 0000};
\node[valid] (01) [above left = 0.44999999999999996 and -2.11 of 0]{\tiny 1000};
\node[valid] (02) [above left = 0.44999999999999996 and -1.005 of 0]{\tiny 0100};
\node[valid] (03) [above left = 0.44999999999999996 and 0.1 of 0]{\tiny 0010};
\node[valid] (04) [above left = 0.44999999999999996 and 1.205 of 0]{\tiny 0001};

\node[valid] (012) [above left = 1.15 and -2.9999999999999996 of 0]{\tiny 1100};
\node[uvalid] (013) [above left = 1.15 and -1.9499999999999998 of 0]{\tiny 1010};
\node[uvalid] (023) [above left = 1.15 and -1 of 0]{\tiny 0110};
\node[valid] (014) [above left = 1.15 and 0.05 of 0]{\tiny 1001};
\node[valid] (024) [above left = 1.15 and 0.9999999999999999 of 0]{\tiny 0101};
\node[mus] (034) [above left = 1.15 and 2.0500000000000001 of 0]{\tiny 0011};

\node[uinvalid] (0123) [above left = 1.8499999999999996 and -2.11 of 0]{\tiny 1110};
\node[valid] (0124) [above left = 1.8499999999999996 and -1.005 of 0]{\tiny 1101};
\node[invalid] (0134) [above left = 1.8499999999999996 and 0.1 of 0]{\tiny 1011};
\node[invalid] (0234) [above left = 1.8499999999999996 and 1.205 of 0]{\tiny 0111};
\node[invalid] (01234) [above left = 2.55 and -0.475 of 0]{\tiny 1111};
\path[every node/.style={font=\sffamily\small}]
(0) edge[] node [left] {} (01)
    edge[] node [left] {} (02)
    edge[] node [left] {} (03)
    edge[] node [left] {} (04)
(01) edge[] node [left] {} (012)
    edge[] node [left] {} (013)
    edge[] node [left] {} (014)
(02) edge[] node [left] {} (012)
    edge[] node [left] {} (023)
    edge[] node [left] {} (024)
(03) edge[] node [left] {} (013)
    edge[] node [left] {} (023)
    edge[] node [left] {} (034)
(04) edge[] node [left] {} (014)
    edge[] node [left] {} (024)
    edge[] node [left] {} (034)
(012) edge[] node [left] {} (0123)
    edge[] node [left] {} (0124)
(013) edge[] node [left] {} (0123)
    edge[] node [left] {} (0134)
(023) edge[] node [left] {} (0123)
    edge[] node [left] {} (0234)
(014) edge[] node [left] {} (0124)
    edge[] node [left] {} (0134)
(024) edge[] node [left] {} (0124)
    edge[] node [left] {} (0234)
(034) edge[] node [left] {} (0134)
    edge[] node [left] {} (0234)
(0123) edge[] node [left] {} (01234)
(0124) edge[] node [left] {} (01234)
(0134) edge[] node [left] {} (01234)
(0234) edge[] node [left] {} (01234)
;
\end{tikzpicture}
\caption{The situation after the first call of $\findSSeed$.}%
\label{ex:execution3}
\end{minipage}%
\end{figure}

\blue{
We illustrate an execution of Algorithm~\ref{alg:msr-scheme} on the TA $\TA$ defined in Example~\ref{ex:running} (Fig.~\ref{fig:ex}) with an initial location $l_0$ and a target unreachable set of locations $L_T = \{ l_4\}$.
For the sake of a graphical illustration, we restrict our analysis to possible removal of only 4 simple clock constraints: $c_1 = x \geq 9$, $c_2 = z \geq 9$, $c_3 = x \leq 14$ and $c_4 = u \leq 26$ that appear on edge $e_1$, edge $e_5$, location $l_1$, and location $l_3$, respectively (same as in Example~\ref{ex:msr2}).
We will use a bitvector notation to denote the individual reductions, e.g., $\TA_{1011}$ represents the reduction $\TA_{<D,I>}$ where $D \cup I = \{ c_1, c_3, c_4\}$.

The computation starts by setting $N$ to $\TA_{1111}$, $\mathcal{I} = \emptyset$ and $\mathcal{S} = \emptyset$. Subsequently, in the first iteration of Algorithm~\ref{alg:msr-scheme}, $N$ is shrunk into an MSR $M$. Assume that $M = \TA_{0011}$, and that $\mathcal{I}$ was enlarged to $\mathcal{I} = \{ \TA_{0001},\TA_{0010},\TA_{0000}  \}$. After the shrinking, Algorithm~\ref{alg:msr-scheme} also enlarges the sets $\mathcal{I}$ and $\mathcal{S}$ by adding to them reductions that are smaller and larger than $M$ w.r.t. $\sqsubseteq$ and $\sqsupseteq$, respectively.
We depict the situation at this moment in Figure~\ref{ex:execution2}. The power-set in the figure represents all possible reductions of $\TA$ (in the picture, we denote a reduction $\TA_B$ by the bitvector $B$). The reductions with dashed border are insufficient, and the reductions with solid border are sufficient. We use green and red background color to highlight the reductions in sets $\mathcal{S}$ and $\mathcal{I}$, respectively. Moreover, we highlight in blue two reductions, $\TA_{0100}$ and $\TA_{1000}$, that will form the set $\mathcal{C}_s$ in the subsequent call of $\findSSeed(M, \mathcal{I}, \mathcal{S})$.

During the execution of $\findSSeed(M, \mathcal{I}, \mathcal{S})$, assume we first pick the candidate reduction $N = \TA_{0100} \in \mathcal{C}_s$ and check it for sufficiency. It is insufficient, hence we enlarge it (via $\enlarge(N, \mathcal{S})$) to an insufficient reduction $E$; assume $E = \TA_{1101}$. Subsequently, we add to $\mathcal{I}$ every reduction $N'$ such that $N' \sqsubseteq E$. The situation at this moment is depicted in Figure~\ref{ex:execution3}. At this point, $\mathcal{C}_s$ is empty, i.e., we have the guarantee that there is no s-seed that would be smaller than $M$ w.r.t. $\sqsubsetneq$. Hence, $\findSSeed$ terminates, and Algorithm~\ref{alg:msr-scheme} then also terminates determining that the $M$ from the first (and only) iteration is a minimum MSR\@.

Finally, let us note that there are different possible executions of our algorithm on the given example. In particular, in Algorithm~\ref{alg:find-seed}, we choose a reduction $N$ from the candidate set $\mathcal{C}_s$ and the choice determines which sufficient reduction will be produced (if any). Similarly, in Algorithm~\ref{alg:single-msr-extraction}, we pick constraints $c$ in some order and this order determines which MSR will be produced. We observed that different reduction and constraint choices affect the performance of the overall algorithm, both in the runtime and the number of performed verifier calls. However, we postpone a development of a suitable heuristic for making good choices here for a future work.
}

\section{Finding Maximal Insufficient Reductions}\label{sec:mg}

In this section, we describe our approach for finding maximum maximal insufficient reductions (MIRs), and consequently also their complementary  minimum minimal guarantees (MGs).

\subsection{Base scheme for Computing a Maximum MIR}
Our scheme for computing a maximum MIR is shown in Algorithm~\ref{alg:mir-scheme} and it works in a dual way to the scheme for computing a minimum MSR (Algorithm~\ref{alg:msr-scheme}).
We iteratively identify a sequence $M_1, M_2, \ldots, M_k$ of MIRs such that $|M_1| < |M_2| < \cdots < |M_k|$ and the last MIR, $M_k$, is a maximum MIR\@.
To find each MIR $M_i$ in the sequence, we proceed in two steps. First, we identify an \emph{i-seed}, i.e., an insufficient reduction $N_i$ such that $|N_i| > |M_{i-1}|$. Second, we \emph{enlarge} $N_i$ into the MIR $M_i$ (i.e., $N_i \sqsubseteq M_i$ and hence $|N_i| \leq |M_i|$). Once there is no more i-seed, it is guaranteed that the last identified MIR $M_{i-1} = M_k$ is a maximum MIR\@.
The initial i-seed $N_1$ is the reduction $\TA_{<\emptyset,\emptyset>} = \TA$ (we assume that $L_T$ is indeed unreachable on the input TA $\TA$).

Same as in case of Algorithm~\ref{alg:msr-scheme}, this scheme also maintains the auxiliary sets $\mathcal{I}$ and $\mathcal{S}$ to store all identified insufficient and sufficient reductions, respectively.

\begin{algorithm}[t!]
\DontPrintSemicolon

$N \gets \TA_{<\emptyset,\emptyset>};\, \mathcal{I} \gets \emptyset;\, \mathcal{S} \gets \emptyset $\;
\While{$N \neq \mathtt{null}$}{
	$M, \mathcal{S} \gets \enlarge(N,\mathcal{S})$ \tcp*{Algorithm~\ref{alg:single-mir-extraction}}
	$\mathcal{I} \gets \mathcal{I} \cup \{ M'\, |\, M' \sqsubseteq M\}$\;\label{alg-line:ins-block}
	$N, \mathcal{S} \gets \findISeed(M, \mathcal{I}, \mathcal{S})$ \tcp*{Algorithm~\ref{alg:find-iseed}}
}
\Return $M$\;

\caption{Maximum MIR Extraction Scheme}%
\label{alg:mir-scheme}
\end{algorithm}

\begin{algorithm}[t!]
\DontPrintSemicolon

$X \gets \emptyset$\;
\While{$(\Psi(\Delta) \cup \Psi(Inv)) \setminus (D \cup I ) \neq X$}{
	$c \gets $ pick a constraint from $(\Psi(\Delta) \cup \Psi(Inv)) \setminus (D \cup I \cup X)$\;
	let $R = \TA_{<D \cup \{c\},I>}$ if $c \in \Psi(\Delta)$ and $R = \TA_{<D,I \cup \{c\}>}$ otherwise\;
	\eIf{$R \not\in \mathcal{S}$ \textbf{and} $R$ is not sufficient}{
		$\TA_{<D,I>} \gets R$\;
	}{
		$X \gets X \cup \{c\}$\;
		$\mathcal{S} \gets \mathcal{S} \cup \{N \in \mathcal{R}_{\TA}\, |\, N \sqsupseteq R \}$	\;
	}

%
}
\Return $\TA_{<D,I>},\mathcal{S}$\;

\caption{$\enlarge(\TA_{<D,I>},\mathcal{S})$}%
\label{alg:single-mir-extraction}
\end{algorithm}

\subsection{Enlarging an I-Seed}
The procedure {\enlarge} is based on a concept of \emph{conflicting simple clock constraints}.

\begin{defi}[conflicting constraint]\label{defn:conflicting-constraint}
Given an insufficient reduction $\TA_{<D,I>}$, a simple clock constraint $c \in (\Psi(\Delta) \cup \Psi(Inv)) \setminus (D \cup I )$ is \emph{conflicting} for $\TA_{<D,I>}$ if the reduction  $\TA_{<D',I'>}$ with $D' \cup I' = D \cup I \cup \{c\}$ is sufficient.
\end{defi}

Note that if a constraint $c$ is conflicting for an insufficient reduction $N$ then $c$ is also conflicting for every insufficient reduction $N'$ with $N \sqsubseteq N'$. Moreover, note that a reduction $\TA_{<D,I>}$ is an MIR iff every $c \in (\Psi(\Delta) \cup \Psi(Inv)) \setminus (D \cup I )$ is conflicting for $\TA_{<D,I>}$.

The procedure $\enlarge(N, \mathcal{S})$ is shown in Algorithm~\ref{alg:single-mir-extraction}.
The algorithm iteratively maintains an insufficient reduction $\TA_{<D,I>}$ and a set $X$ of constraints that are known to be conflicting for
 $\TA_{<D,I>}$. Initially, $\TA_{<D,I>} = N$ and $X = \emptyset$. In each iteration, the algorithm picks a simple clock constraint
 $c \in (\Psi(\Delta) \cup \Psi(Inv)) \setminus (D \cup I \cup X)$ and checks whether $c$ is conflicting for $\TA_{<D,I>}$.
 If $c$ is conflicting, then it is added to $X$. Otherwise, if $c$ is not conflicting, then $\TA_{<D,I>}$ is extended either to
 $\TA_{<D \cup \{c\},I>}$ or to  $\TA_{<D,I \cup \{c\}>}$ (depending on if $c \in \Psi(\Delta)$ or $c \in \Psi(Inv)$).
The algorithm maintains the invariant that
every $c \in X$ is conflicting for $\TA_{<D,I>}$, and hence when $X = (\Psi(\Delta) \cup \Psi(Inv)) \setminus (D \cup I )$, it is guaranteed that $\TA_{<D,I>}$ is an MIR\@.

The check whether a constraint  $c \in (\Psi(\Delta) \cup \Psi(Inv)) \setminus (D \cup I \cup X)$ is conflicting for $\TA_{<D,I>}$ is carried out by testing whether the reduction $\TA_{<D',I'>}$ with $D' \cup I' = D \cup I \cup \{c\}$ is sufficient. In particular, to save some invocations of the verifier, we first, in a lazy manner, check whether $\TA_{<D',I'>} \in \mathcal{S}$ (i.e., $\TA_{<D',I'>}$ is already known to be sufficient). If $\TA_{<D',I'>} \not\in \mathcal{S}$, we check $\TA_{<D',I'>}$ for sufficiency via the verifier.
Also, note that whenever we identify a sufficient reduction $\TA_{<D',I'>}$, we add every reduction $X$ such that $X \sqsupseteq \TA_{<D',I'>}$ to $\mathcal{S}$ (by Proposition~\ref{proposition:monotonicity}, every such $X$ is also sufficient).
Finally, note that we do not add any insufficient reduction that is identified during the enlargement to the set $\mathcal{I}$. The reason is that all insufficient reductions that are identified during the enlargement are smaller (w.r.t. $\sqsubseteq$) than the resultant MIR, and we update $\mathcal{I}$ based on the MIR after the enlargement (Algorithm~\ref{alg:mir-scheme}, line~\ref{alg-line:ins-block}).

\subsection{Finding an I-Seed}
\begin{algorithm}[t!]
\DontPrintSemicolon

\While{$ \{ N \in \mathcal{R}_{\TA}\, |\, N \not\in \mathcal{S} \wedge |N|= |M| + 1\} \neq \emptyset$}{
	$N \gets $ pick from $\{ N \in \mathcal{R}_{\TA}\, |\, N \not\in \mathcal{S} \wedge |N|= |M| +1\}$\;
	\lIf{$N$ is insufficient}{
		\Return $N,\mathcal{I},\mathcal{S}$
	}
	\Else
	{
			$E, \mathcal{I} \gets \shrink(N, \mathcal{I})$ \tcp*{Algorithm~\ref{alg:single-msr-extraction}}
			$\mathcal{S} \gets \mathcal{S} \cup \{N' \in \mathcal{R}_{\TA}\, |\, N' \sqsupseteq E \}$\;
	}
}
\Return $\mathtt{null},\mathcal{I}, \mathcal{S}$

\caption{$\findISeed(M, \mathcal{I}, \mathcal{S})$}%
\label{alg:find-iseed}
\end{algorithm}

The procedure \findISeed{} works dually to the procedure \findSSeed{}.
The input is the latest identified MIR $M$ and the sets $\mathcal{I}$ and $\mathcal{S}$ of known insufficient and sufficient reductions. The output is an i-seed, i.e., an insufficient
reduction $N$ such that $|N| > |M|$, or \texttt{null} if there is no i-seed.

We exploit two basic observations while searching for $N$. First, observe that no reduction that is already known to be sufficient, i.e., belongs to $\mathcal{S}$, can be an i-seed. Second, observe that:

\begin{obs}
For every insufficient reduction $N$ with $|N| > |M|$, there exists an insufficient $N'$ such that $N' \sqsubseteq N$ and $|N'| = |M| + 1$.
\end{obs}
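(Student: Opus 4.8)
The plan is to mirror the proof of Observation~\ref{obs:sseed-max}, dualizing the direction: instead of enlarging a sufficient reduction upward, I would shrink an insufficient reduction downward. The only tool needed is Proposition~\ref{proposition:monotonicity}, whose contrapositive states that insufficiency is preserved downward along $\sqsubseteq$: if $N' \sqsubseteq N$ and $N$ is insufficient, then $N'$ must be insufficient, since otherwise $N'$ would be sufficient and Proposition~\ref{proposition:monotonicity} would force $N$ to be sufficient as well.

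First I would dispose of the trivial case $|N| = |M|+1$, where we simply set $N' = N$; it is already insufficient and has the required cardinality. For the main case $|N| > |M|+1$, write $N = \TA_{<D^N,I^N>}$, so that $|D^N \cup I^N| = |N| \geq |M|+1$. I would then pick an arbitrary subset $S \subseteq D^N \cup I^N$ with $|S| = |M|+1$ (which exists since $|D^N \cup I^N| \geq |M|+1$), and split it along the union $\Psi(\Delta) \cup \Psi(Inv)$ by setting $D^{N'} = S \cap \Psi(\Delta)$ and $I^{N'} = S \cap \Psi(Inv)$, yielding the reduction $N' = \TA_{<D^{N'},I^{N'}>}$. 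By construction $D^{N'} \cup I^{N'} = S \subseteq D^N \cup I^N$, so $N' \sqsubseteq N$ and $|N'| = |M|+1$. Because $N$ is insufficient and $N' \sqsubseteq N$, the contrapositive of Proposition~\ref{proposition:monotonicity} immediately gives that $N'$ is insufficient, which finishes the argument.

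The main subtlety — really the only one — lies in the bookkeeping: a reduction is represented as an ordered pair $(D,I)$ rather than as a single flat set, so I must check that the shrinking respects this representation. Here I would invoke the fact that $\Psi(\Delta)$ (transition–constraint pairs) and $\Psi(Inv)$ (location–constraint pairs) are disjoint, so that every element of the chosen subset $S$ lies in exactly one of them. Hence the split into $D^{N'}$ and $I^{N'}$ is unambiguous, $N'$ is a genuine reduction, and $|N'| = |D^{N'} \cup I^{N'}| = |S| = |M|+1$ as claimed. No counting identity or combinatorial estimate is required, in contrast to the dual Observation~\ref{obs:sseed-max}, where one adds constraints upward; the downward direction is if anything simpler, since any subset of an insufficient reduction remains insufficient.
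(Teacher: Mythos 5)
Your proof is correct and is exactly the argument the paper intends: its own proof consists of the single line ``Dually to the proof of Observation~\ref{obs:sseed-max},'' and your write-up is precisely that dualization --- handle $|N| = |M|+1$ by taking $N' = N$, otherwise drop constraints down to a subset of size $|M|+1$, and invoke the contrapositive of Proposition~\ref{proposition:monotonicity} for insufficiency. Your added remarks (the disjointness of $\Psi(\Delta)$ and $\Psi(Inv)$ making the split well-defined, and the fact that no analogue of choosing constraints from $M$ is needed in the downward direction) are accurate refinements of the same approach, not a different route.
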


\begin{proof}
Dually to the proof of Observation~\ref{obs:sseed-max}.
\end{proof}

Exploiting the above two observations, we build a set $\mathcal{C}_i$ of indispensable candidates on i-seeds that need to be tested for sufficiency to either find an i-seed or to prove that there are no more i-seeds:
\begin{equation}\label{eq:iseed_candidates}
\mathcal{C}_i = \{ N \in \mathcal{R}_{\TA}\, |\, N \not\in \mathcal{S} \wedge |N|= |M| + 1\}
\end{equation}

The procedure {\findISeed} (Algorithm~\ref{alg:find-iseed}) iteratively picks a reduction $N \in \mathcal{C}_i$ and checks it for sufficiency via the verifier. If $N$ is found to be insufficient, it is returned as the i-seed. Otherwise, when $N$ is sufficient,
the algorithm \emph{shrinks} $N$ to an MSR $E$ via Algorithm~\ref{alg:single-msr-extraction}.
By Proposition~\ref{proposition:monotonicity}, every reduction $N'$ such that $N' \sqsupseteq E$ is also sufficient; hence, we add all these reductions to $\mathcal{S}$ (and thus implicitly remove them from $\mathcal{C}_i$).
If $\mathcal{C}_i$ becomes empty, then there is no  i-seed.

\section{Representation of \texorpdfstring{$\mathcal{I}$}{I}, \texorpdfstring{$\mathcal{S}$}{S}, \texorpdfstring{$\mathcal{C}_s$}{Cs}, and \texorpdfstring{$\mathcal{C}_i$}{Ci}}\label{sec:representation}
Let us now describe how to efficiently represent and maintain the sets $\mathcal{I}$, $\mathcal{S}$, $\mathcal{C}_s$ and $\mathcal{C}_i$ that are used in our algorithms. Recall that we need to be able to add elements to these sets, obtain elements from these sets, and in case of $\mathcal{C}_s$ and $\mathcal{C}_i$ also perform emptiness checks.
The problem is that the size of these sets can be expontential w.r.t. $|\Psi(\Delta) \cup \Psi(Inv)|$ (there are exponentially many reductions), and thus, it is practically intractable to maintain the sets explicitly.
Instead, we use a symbolic representation.

Given a timed automaton $\TA$ with simple clock constraints $\Psi(\Delta) = \{ (e_1, \varphi_1), \ldots, (e_p, \varphi_p) \}$ and $\Psi(Inv) = \{ (l_1, \varphi_1), \ldots, (l_q, \varphi_q) \}$,
we introduce two sets of Boolean variables $X = \{x_1, \ldots, x_p\}$ and $Y = \{y_1, \ldots, y_q\}$. Note that every valuation of the variables $X \cup Y$ one-to-one maps to the reduction $\TA_{<D,I>}$ such that $(e_i, \varphi_i) \in D$ iff $x_i$ is assigned \emph{True} and $(l_j, \varphi_j) \in I$ iff $y_j$ is assigned \emph{True}.

The sets $\mathcal{I}$ and $\mathcal{S}$ are used both in Algorithm~\ref{alg:msr-scheme} and Algorithm~\ref{alg:mir-scheme}, and in both cases, they are gradually maintained during the whole computation of the algorithms.
To represent $\mathcal{I}$, we build a Boolean formula $\mathbb{I}$ such that a reduction $N$ \textbf{does not} belong to $\mathcal{I}$ iff $N$ \textbf{does} correspond to a model of $\mathbb{I}$.
Initially, $\mathcal{I} = \emptyset$, thus $\mathbb{I} = \mathit{True}$. To add an insufficient reduction $\TA_{<D,I>}$ and all reductions $N$, $N \sqsubseteq \TA_{<D,I>}$, to $\mathcal{I}$, we add to $\mathbb{I}$ the clause
$(\bigvee_{(e_i, \varphi_i) \in \Psi(\Delta) \setminus D} x_i) \vee (\bigvee_{(l_j, \varphi_j) \in \Psi(Inv) \setminus I} y_j)$.
To test if a reduction $N$ is in the set $\mathcal{I}$, we check if the valuation of $X \cup Y$ that corresponds to $N$ is not a model of $\mathcal{I}$.

Similarly, to represent $\mathcal{S}$, we build a Boolean formula $\mathbb{S}$ such that a reduction $N$ \textbf{does not} belong to $\mathcal{S}$ iff $N$ \textbf{does} correspond to a model of $\mathbb{S}$.
Initially, $\mathcal{S} = \emptyset$, thus $\mathbb{S} = \mathit{True}$. To add a sufficient reduction $\TA_{<D,I>}$ and all reductions $N$, $N \sqsupseteq \TA_{<D,I>}$, to $\mathcal{S}$, we add to $\mathbb{S}$ the clause
$(\bigvee_{(e_i, \varphi_i) \in D} \neg x_i) \vee (\bigvee_{(l_j, \varphi_j) \in I} \neg y_j)$.

The set $\mathcal{C}_s$ is used only in Algorithm~\ref{alg:msr-scheme}; namely in its subroutine {\findSSeed}. We build the set
$\mathcal{C}_s$ repeatedly during each call of {$\findSSeed(M, \mathcal{M}, \mathcal{I}, \mathcal{S})$} based on Equation~(\ref{eq:seed_candidates}) and we encode it via a Boolean formula $\mathbb{C}_s$ such that every model of $\mathbb{C}_s$ \textbf{does} correspond to a reduction $N \in \mathcal{C}_s$:
\begin{equation}
\mathbb{C}_s = \mathbb{I} \wedge \mathtt{trues( |M| -1 }\mathtt{)} 
\end{equation}
where $\mathtt{trues( |M| -1 }\mathtt{)}$ is a cardinality encoding forcing that exactly $|M| - 1$ variables from $X \cup Y$ are set to \emph{True}. To check if $\mathcal{C}_s = \emptyset$ or to pick a reduction $N \in \mathcal{C}_s$, we ask a SAT solver for a model of $\mathbb{C}_s$. 
To remove an insufficient reduction from $\mathcal{C}_s$, we update the formula $\mathbb{I}$ (and thus also $\mathbb{C}_s$) as described above.

Finally,
the set
$\mathcal{C}_i$ is used in the subroutine {\findISeed} of Algorithm~\ref{alg:mir-scheme}. We build the set repeatedly
during each call of {$\findISeed(M, \mathcal{M}, \mathcal{I}, \mathcal{S})$} and to represent it, we maintain a
Boolean formula $\mathbb{C}_i$ such that every model of $\mathbb{C}_i$ \textbf{does} correspond to a reduction $N \in \mathcal{C}_i$: 
\begin{equation}
\mathbb{C}_i = \mathbb{S} \wedge \mathtt{trues( |M| + 1 }\mathtt{)}
\end{equation}
where $\mathtt{trues( |M| + 1 }\mathtt{)}$ is a cardinality encoding forcing that exactly $|M| + 1$ variables from $X \cup Y$ are set to \emph{True}. To check if $\mathcal{C}_i = \emptyset$ or to pick a reduction $N \in \mathcal{C}_i$, we ask a SAT solver for a model of $\mathbb{C}_i$, and to remove a sufficient reduction from $\mathcal{C}_i$, we update the formula $\mathbb{S}$.


\section{Relaxing Minimal Sufficient Reductions}\label{sec:msr-relax}

In Section~\ref{sec:msr}, we considered a timed automaton $\TA = (L, l_0, C, \Delta, Inv )$ and a set of its locations $L_T \subseteq L$, and we presented an efficient algorithm to find a sufficient reduction (see Definition~\ref{defn:msr}), i.e., a set of simple clock constraints  $D \subseteq \Psi(\Delta)$~\eqref{eq:guard_cons_set}  (over transitions) and $I \subseteq \Psi(Inv)$~\eqref{eq:inv_cons_set}  (over locations) such that $L_T$ is reachable when constraints $D$ and $I$ are removed from $\TA$. In other words, $L_T$ is reachable on $\TA_{<D,I>}$.  Here, instead of completely removing $D \cup I$, our goal is to find a relaxation valuation $\mathbf{r} : D \cup I \to \mathbb{N}\cup\{\infty\}$ such that $L_T$ is reachable on $\TA_{<D,I,\mathbf{r}>}$. In addition, we intend to minimize the total change in the timing constants, i.e., $\sum_{\phi \in D \cup I} \mathbf{r}(\phi)$.
We present two methods to find such a valuation. The first one solves an MILP using a witness path $\pi'_{L_T}$ of $\TA_{<D,I>}$ that ends in $L_T$. The second one parametrizes each constraint from $D \cup I$ and solves a parameter synthesis problem on the resulting parametric timed automata. While the second method assumes all witness paths of $\TA_{<D,I>}$ and hence it is guaranteed to find the relaxation $\mathbf{r}$ with minimal $\sum_{\phi \in D \cup I} \mathbf{r}(\phi)$ for the considered MSR, the first method is computationally more efficient.

\subsection{MILP Based Relaxation}

By the definition of a sufficient reduction, the set $L_T$ is reachable on $\TA_{<D,I>}$.
Consequently, when a verifier is used to check the reachability of $L_T$, it
 generates a finite witness run $\rho'_{L_T}= (l_0, \textbf{0})  {\to}_{d_0} (l_1, v_1) {\to}_{d_1} \ldots  {\to}_{d_{n-1}} (l_n, v_n)$
 of $\TA_{<D,I>}$ such that $l_n \in L_T$.  Let $\pi'_{L_T} = l_0, e'_1, l_1, \ldots, e'_{n-1}, l_n$ be the corresponding path on $\TA_{<D,I>}$, i.e., $\pi'_{L_T}$ is realizable on $\TA_{<D,I>}$ due to the delay sequence $d_0, d_1, \ldots, d_{n-1}$ and the resulting run is $\rho'_{L_T}$.  The corresponding path on the original TA $\TA$ is defined in~\eqref{eq:path_map}:
\begin{equation}\label{eq:LTpath} \pi'_{L_T} = M(\pi_{L_T}), \text{ and }  \pi_{L_T} =  l_0, e_1, l_1, \ldots, e_{n-1}, l_n,
\end{equation}
While $\pi'_{L_T}$ is realizable on $\TA_{<D,I>}$, $\pi_{L_T}$ is not realizable on $\TA$ since $L_T$ is not reachable on $\TA$.  We present an MILP based method to find a relaxation valuation $\mathbf{r} : D \cup I \to \mathbb{N}\cup\{\infty\}$ such that the path induced by $\pi_{L_T}$ is realizable on $\TA_{<D,I,\mathbf{r}>}$.

For a given automaton path  $\pi = l_0, e_1, l_1, \ldots, e_{n-1}, l_n$ with $e_i = (l_{i-1}, \lambda_i, \phi_i, l_i)$ for each $i=1,\ldots,n-1$, we introduce real valued delay variables $\delta_0, \ldots, \delta_{n-1}$ that represent the time spent in each location along the path except the last one ($l_n$).
For a particular path, the value of a clock on a given constraint (invariant or guard) can be mapped to a sum of delay variables as each clock measures the time passed since its last reset:
\begin{equation}\label{eq:transform}
\Gamma(x, \pi, i) = \delta_k + \delta_{k+1} + \ldots  + \delta_{i-1}  \text{ where }  k = \max(\{ m  \mid x \in \lambda_m, m < i\} \cup \{ 0 \})
\end{equation}
The value of clock $x$ equals to $\Gamma(x, \pi, i)$ on the i-th transition $e_i$ along $\pi$. In~\eqref{eq:transform}, $k$ is the index of the transition where $x$ is last reset before $e_i$ along $\pi$, and it is $0$ if it is not reset. $\Gamma(0, \pi, i)$ is defined as $0$ for notational convenience.

Next, we define an MILP~\eqref{eq:LP} for the path $\pi$. By using the transformation~\eqref{eq:transform}, we map each clock constraint along the given path $\pi$ to constraints over the sequence of delay variables $\delta_0,\ldots, \delta_{n-1}$ as shown in~\eqref{eq:MILP_guard},\eqref{eq:MILP_a_invariant},\eqref{eq:MILP_l_invariant}. In addition, we introduce integer valued constraint relaxation variables  $\{ p_{l,\varphi} \mid (l,\varphi) \in I \}$ and $\{ p_{e,\varphi} \mid (e,\varphi) \in D\}$ for each simple constraint from $D \cup I$. In particular, for each transition $e_i$, the simple constraints $\varphi = x - y \sim c \in \mathcal{S}(\phi_i)$ of the guard $\phi_i$ of $e_i$ are mapped to the new delay variables~\eqref{eq:MILP_guard}, where $p_{e_i, \varphi}$ is the integer valued relaxation variable if $(e_i, \varphi) \in D$, otherwise it is set to $0$. On the other hand, for each location $l_i$, the simple clock constraints $\varphi = x - y \sim c \in \mathcal{S}(Inv(l_i))$ of the invariant $Inv(l_i)$ of $l_i$ are mapped to arriving~\eqref{eq:MILP_a_invariant} and leaving~\eqref{eq:MILP_l_invariant} constraints over the delay variables. In~\eqref{eq:MILP_a_invariant} and~\eqref{eq:MILP_l_invariant}, $\textbf{I}$ is a binary function mapping $\mathit{true}$ to $1$ and $\mathit{false}$ to 0, and $p_{l_i, \varphi_i}$ is the integer valued variable if $(l_i, \varphi_i) \in I$, otherwise it is set to $0$ as in~\eqref{eq:MILP_guard}. Note that if the invariant is satisfied when arriving and leaving, then, due to the convexity of the constraints, it is satisfied at every time when $\TA$ is at the corresponding location along $\pi$.
\begin{align}
& \text{minimize }  \sum_{(l,\varphi) \in I} p_{l,\varphi} +   \sum_{(e,\varphi) \in D} p_{e,\varphi}   \quad\quad\quad  \text{ subject to }\label{eq:LP}\\
& \Gamma(x, \pi, i) - \Gamma(y, \pi, i) \sim c +  p_{e_i, \varphi}   \quad\quad\quad \quad\quad\quad\quad \quad\quad \quad\quad\quad\quad \quad\quad\quad  (guard)  \nonumber \\
&  \quad\quad\quad \quad\quad\quad \quad\quad\quad  \text{ for each }i = 1, \ldots, n-1,  \text{ and } \varphi = x - y \sim c \in\mathcal{S}(\phi_i)\label{eq:MILP_guard} \\
& \Gamma(x, \pi, i) \cdot \textbf{I}(x \not \in \lambda_i) - \Gamma(y, \pi, i) \cdot \textbf{I}(y \not \in \lambda_i) \sim c + p_{l_i, \varphi}\quad (arriving, invariant) \nonumber \\
& \quad\quad\quad\quad\quad\quad\quad\quad\quad\quad \text{ for each } i = 1, \ldots, n, \varphi=x - y \sim c \in \mathcal{S}(Inv(l_i))\label{eq:MILP_a_invariant}\\
& \Gamma(x, \pi, i+1) - \Gamma(y, \pi, i+1) \sim c + p_{l_i, \varphi} \quad\quad\quad\quad\quad \quad\quad  (leaving, invariant) \nonumber \\
& \quad\quad\quad\quad\quad\quad\quad\quad \text{ for each } i = 0, \ldots, n-1, \varphi=x - y \sim c \in \mathcal{S}(Inv(l_i))\label{eq:MILP_l_invariant} \\
& p_{l,\varphi} \in \mathbb{Z}_+ \quad\quad\quad\quad\quad\quad\quad\quad\quad\quad\quad\quad\quad\quad\quad\quad\quad\quad\quad \text{ for each } (l,\varphi) \in I \\
& p_{e,\varphi} \in \mathbb{Z}_+ \quad\quad\quad\quad\quad\quad\quad\quad\quad\quad\quad\quad\quad\quad\quad\quad\quad\quad\quad \text{ for each } (e,\varphi) \in D    \\
& \delta_i \geq 0  \quad  \quad\quad\quad\quad\quad\quad\quad\quad\quad\quad\quad\quad\quad\quad\quad\quad\quad  \text{ for each } i=0,\ldots,n-1
\end{align}
Let $\{ p^\star_{l,\varphi} \mid (l,\varphi) \in I \}$, $\{ p^\star_{e,\varphi} \mid (e,\varphi) \in D\}$, and $\delta^\star_0,\ldots, \delta^\star_{n-1}$ denote the solution of MILP~\eqref{eq:LP}.
Define a relaxation valuation $\mathbf{r}$ with respect to the solution as
\begin{equation}\label{eq:opt_relaxation}
\mathbf{r}(l,\varphi) = p^\star_{l,\varphi} \text{ for each } (l,\varphi) \in I,    \quad \mathbf{r}(e,\varphi) = p^\star_{e,\varphi} \text{ for each } (e,\varphi) \in D.
\end{equation}

\begin{thm}~\label{thm:LPsoln} Let $\TA = (L, l_0, C, \Delta, Inv )$ be a timed automaton, $\pi = l_0, e_1, l_1, \ldots, e_{n}, l_n$ be a finite path of $\TA$, and $D \subseteq \Psi(\Delta)$, $I \subseteq \Psi(I)$ be guard and invariant constraint sets. If the MILP constructed from $\TA$, $\pi$, $D$ and $I$ as defined in~\eqref{eq:LP} is feasible, then $l_n$ is reachable on $\TA_{<D,I,\mathbf{r}>}$ with $\mathbf{r}$ as defined in~\eqref{eq:opt_relaxation}.
\end{thm}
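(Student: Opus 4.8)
The plan is to turn a feasible MILP solution into an explicit realizable run of $\TA_{<D,I,\mathbf{r}>}$ that follows the path induced by $\pi$ and ends in $l_n$, which by the definition of reachability witnesses that $l_n$ is reachable. Concretely, I would take the optimal values $\delta^\star_0,\ldots,\delta^\star_{n-1}$ together with $\{p^\star_{e,\varphi}\}$ and $\{p^\star_{l,\varphi}\}$ from the feasible solution, define $\mathbf{r}$ as in~\eqref{eq:opt_relaxation}, and propose the candidate run $(l_0,\textbf{0}) {\to}_{\delta^\star_0} (l_1,v_1) {\to}_{\delta^\star_1} \cdots {\to}_{\delta^\star_{n-1}} (l_n,v_n)$ on $\TA_{<D,I,\mathbf{r}>}$, where each $v_i$ is obtained from $v_{i-1}$ by delaying $\delta^\star_{i-1}$ and then applying the reset $\lambda_i$ of $e_i$.

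The crux is a bookkeeping lemma, proved by induction on $i$, stating that the transformation $\Gamma$ of~\eqref{eq:transform} computes exactly the clock values of this run: the value of clock $x$ immediately before the discrete transition $e_i$ equals $\Gamma(x,\pi,i)$. This holds because $\Gamma(x,\pi,i)$ accumulates precisely the delays elapsed since the last reset of $x$ prior to $e_i$ (the index $k$ in~\eqref{eq:transform}), which is by definition the reading of a clock measuring time since its last reset. From this identity it follows that the value of $x$ on arrival at $l_i$ is $\Gamma(x,\pi,i)\cdot\textbf{I}(x\notin\lambda_i)$, since clocks in $\lambda_i$ are zeroed, and that the value of $x$ on leaving $l_i$ (just before $e_{i+1}$) is $\Gamma(x,\pi,i+1)$.

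With the $\Gamma$-correspondence in place, verifying realizability reduces to reading off the MILP inequalities. For each guard simple constraint $\varphi = x - y \sim c \in \mathcal{S}(\phi_i)$, the guard of $e_i$ in $\TA_{<D,I,\mathbf{r}>}$ is $R(\phi_i, D|_{e_i}, \mathbf{r}|_{e_i})$, whose $\varphi$-conjunct is $x - y \sim c + \mathbf{r}(e_i,\varphi)$ when $(e_i,\varphi)\in D$ and $x - y \sim c$ otherwise; in either case the guard constraint~\eqref{eq:MILP_guard}, namely $\Gamma(x,\pi,i) - \Gamma(y,\pi,i) \sim c + p_{e_i,\varphi}$ with $p_{e_i,\varphi}$ equal to $\mathbf{r}(e_i,\varphi)$ or $0$ accordingly, is exactly the assertion that this conjunct holds at the valuation reached before $e_i$. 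Hence $v_{i-1}+\delta^\star_{i-1} \models R(\phi_i, D|_{e_i}, \mathbf{r}|_{e_i})$. The invariant constraints~\eqref{eq:MILP_a_invariant} and~\eqref{eq:MILP_l_invariant} analogously certify that the relaxed invariant $R(Inv(l_i), I|_{l_i}, \mathbf{r}|_{l_i})$ holds both on arrival and on departure at $l_i$; since each simple constraint defines a half-space and conjunction preserves convexity, satisfaction at the two endpoints of the delay interval extends to every intermediate instant, giving $v_i + d \models R(Inv(l_i), I|_{l_i}, \mathbf{r}|_{l_i})$ for all $0 \le d \le \delta^\star_i$. These are precisely the conditions in the definition of a realizable path, so the candidate run is realizable on $\TA_{<D,I,\mathbf{r}>}$ and reaches $l_n$.

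I expect the main obstacle to be the careful handling of the reset structure: aligning the last-reset index $k$ in the definition of $\Gamma$ with the actual valuations $v_i$, and correctly treating the arriving invariant, where the indicator $\textbf{I}(x\notin\lambda_i)$ must account for clocks reset on $e_i$ while the leaving invariant uses the untouched $\Gamma(x,\pi,i+1)$. Once this indexing is settled, the guard and invariant checks follow directly from the respective MILP constraints, and the only remaining routine point is the convexity argument extending endpoint satisfaction of each invariant to the whole time spent in a location.
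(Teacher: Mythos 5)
Your proposal is correct and follows essentially the same route as the paper's proof: both construct the candidate run from the feasible MILP solution, establish the correspondence between $\Gamma(\cdot,\pi,i)$ and the actual clock valuations $v_i$ (your equations for arrival and departure values match the paper's identities $(a)$ and $(b)$ in its proof), and then read off the guard and invariant constraints \eqref{eq:MILP_guard}--\eqref{eq:MILP_l_invariant} to certify the delay and discrete transitions, with the same convexity remark handling intermediate instants. Your explicit induction for the $\Gamma$-correspondence and your careful treatment of the indicator $\textbf{I}(x\notin\lambda_i)$ merely spell out details the paper states more tersely.
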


\begin{proof}
Denote the optimal solution of MILP~\eqref{eq:LP} by $\{ p^\star_{l,\varphi} \mid (l,\varphi) \in I \}$, $\{ p^\star_{e,\varphi} \mid (e,\varphi) \in D\}$, and $\delta^\star_0,\ldots, \delta^\star_{n-1}$. For simplicity of presentation set $p^\star_{l,\varphi}$ to $0$ for each $(l,\varphi) \in \Psi(Inv) \setminus I$ and set $p^\star_{e,\varphi}$ to $0$ for each $(e,\varphi) \in \Psi(\Delta) \setminus D$. Let $\TA_{<D,I,\mathbf{r}>} = (L, l_0, C, \Delta', Inv' )$ and $T(\TA_{<D,I,\mathbf{r}>}) = (S, s_0, \Sigma, \to)$. Define clock value sequence $v_0, v_1, \ldots, v_{n}$ with respect to the path $\pi$ with $e_i = (l_{i-1}, \lambda_i, \phi_i, l_i)$ and the delay sequence $\delta^\star_0,\ldots, \delta^\star_{n-1}$ iteratively as $v_i=\textbf{0}$ and $v_i = (v_{i-1} + \delta^\star_{i-1})[\lambda_{i} := 0]$  for each  $i=1,\ldots, n$. Along the path $\pi$, $v_i$ is consistent  with $\Gamma(\cdot, \pi, i)$~\eqref{eq:transform} such that \begin{equation}\label{eq:mapvd} (a)\ v_i(x) = \Gamma(x, \pi, i).I(x \not \in \lambda_i) \quad \quad and  \quad (b)\ v_i(x) + \delta^\star_i = \Gamma(x, \pi, i+1)
\end{equation}
For a simple constraint $\varphi = x - y \sim c + p^\star_{l_i,\varphi} \in Inv'(l_i)$ (i.e. $x - y \sim c \in Inv(l_i)$ via Definition~\ref{defn:relaxation} and~\eqref{eq:opt_relaxation}), it holds that $v_i(x) - v_i(y) \sim c + p^\star_{l_i,\varphi} $ via~\eqref{eq:MILP_a_invariant} and~\eqref{eq:mapvd}-$a$.
Then by~\eqref{eq:opt_relaxation} $v_i \models Inv'(l_i)$ and $(l_i, v_i) \in S$. Similarly, $v_i + \delta^\star_i \models Inv'(l_i)$ via~\eqref{eq:MILP_l_invariant} and~\eqref{eq:mapvd}-$b$.
Hence, $(l_i, v_i + \delta^\star_i) \in S$ and $(l_i, v_i) \stackrel{\delta^\star_i}{\to} (l_i, v_i + \delta^\star_i)$ (delay transition).
Furthermore, by~\eqref{eq:MILP_guard},~\eqref{eq:opt_relaxation},~\eqref{eq:mapvd}-$b$ and Definition~\ref{defn:relaxation}, we have $v_i + \delta^\star_i \models R(\phi_i, D|_{e_i} , \mathbf{r}|_{e_i})$ and $(l_i, v_i + \delta^\star_i) \stackrel{act}{\to} (l_{i+1}, v_{i+1})$ (discrete transition). As $s_0 = (l_0, \textbf{0}) \in S$, and the derivation applies to each $i=1,\ldots, n$, we reach that $\rho = (l_0, v_0),\ldots, (l_n, v_n) \in [[\TA_{<D,I,\mathbf{r}>}]]$, and $l_n$ is reachable on $\TA_{<D,I,\mathbf{r}>}$.
\end{proof}

A linear programming (LP) based approach was used in~\cite{Bouyer2007OnTO} to generate the optimal delay sequence for a given path of a weighted timed automata. In our case, the optimization problem is in MILP form since we find an integer valued relaxation valuation ($\mathbf{r}$) in addition to the delay variables.

Recall that we construct relaxation sets $D$ and $I$ via Algorithm~\ref{alg:msr-scheme}, and define $\pi_{L_T}$~\eqref{eq:LTpath} that reach $L_T$ such that the corresponding path $\pi'_{L_T}$ is realizable on $\TA_{<D,I>}$. Then, we define MILP~\eqref{eq:LP} with respect to $\pi_{L_T}$, $D$ and $I$, and define $\mathbf{r}$~\eqref{eq:opt_relaxation} according to the optimal solution. Note that this MILP is always feasible since $\pi'_{L_T}$ is realizable on $\TA_{<D,I>}$. Finally, by Theorem~\ref{thm:LPsoln}, we conclude that $L_T$ is reachable on $\TA_{<D,I,\mathbf{r}>}$.

\subsection{Parameter Synthesis Based Relaxation}\label{sec:msr-relax-parametric}

As our second approach, we parametrize each simple constraint in the considered MSR\@. In particular, for each $(v, \varphi = x - y \sim c) \in D \cup I$ ($v$ is either a transition $e$ or a location $l$), we introduce a positive valued parameter $p_{v,\varphi}$ and replace the corresponding constraint with  $x - y \sim c + p_{v,\varphi}$. The resulting TA $\TA^{D\cup I}$ is parametric with parameter set $P = \{p_{(v,\varphi)} \mid (v, \varphi) \in D \cup I\}$. $\TA^{D\cup I}$ has $|D \cup I|$ parametric constraints and each parameter appears in a single constraint.  Subsequently, we use a parameter synthesis tool that generates the set of all parameter valuations $\mathbb{P} \subseteq \mathbb{R}_+^{\mid P \mid}$ for $\TA^{D\cup I}$ such that the target set $L_T$ becomes reachable, i.e., for each $\mathbf{p} \in \mathbb{P}$, $L_T$ is reachable on $\TA^{D\cup I}(\mathbf{p})$, where $\TA^{D\cup I}(\mathbf{p})$ is a non-parametric TA obtained from $\TA^{D\cup I}$ and $\mathbf{p}$ by replacing each parameter $p_{v,\varphi}$ with the corresponding valuation $\mathbf{p}(p_{v,\varphi})$. Then, we choose the integer valued parameter valuation $\mathbf{p}^\star : P  \to \mathbb{N}$ that minimizes the total change, i.e,  $\mathbf{p}^\star = \arg\min_{\mathbf{p} \in \mathbb{P} \cap \mathbb{N}} \sum_{(v,\varphi) \in D \cup I} \mathbf{p}(p_{v,\varphi})$. The parameter synthesis method ensures that $L_T$ is reachable on $\TA_{<D,I,\mathbf{r}>}$, where $\mathbf{r}$ is defined from $\mathbf{p}^\star$ as in~\eqref{eq:opt_relaxation}.

\subsection{Comparison of the MSR Relaxation Methods} The MILP based relaxation method minimizes the total change in the timing constants ($\sum_{\phi \in D \cup I} \mathbf{r}(\phi)$) for a particular path $\pi_{L_T}$. Thus, the resulting relaxation valuation~\eqref{eq:opt_relaxation} is not necessarily minimal for the considered MSR $D \cup I$.
Whereas, the parameter synthesis based relaxation method is guaranteed to find the minimal valuation (as it considers all paths of $\TA_{<D,I>}$). However, it is computationally more expensive compared to the MILP approach due to the complexity of the parameter synthesis for timed automata.

Let us note that both our approaches work with a fixed minimum MSR $\TA_{<D,I>}$. However, observe that
there might exist another minimum MSR $\TA_{<D',I'>}$ with $|D' \cup I'| = |D \cup I|$ that would lead to a smaller overall change of the constraints (i.e., smaller $\sum_{c \in D' \cup I'} \mathbf{r}(c)$). While our approach can be applied to a number of minimum MSRs, processing all of them can be practically intractable.


\section{Relaxing Minimal Guarantees}\label{sec:mg-relax}

In Section~\ref{sec:mg}, we presented a method to find a minimal guarantee $D \cup I$ (MG), i.e., a minimal subset of the constraints that need to be left in the system to ensure that a target (unsafe) location is still not reachable. In particular,  $L_T$ is not reachable on $\TA' = \TA_{<\Psi(\Delta) \setminus D, \Psi(Inv) \setminus I>}$ (see Definition~\ref{defn:mg}). In this section, we attempt to relax the timing constraints in the resulting TA $\TA'$, i.e., $D \cup I$, as much as possible while ensuring that $L_T$ is still unreachable. Thus, we analyze how \textit{robust} the resulting TA is against constraint perturbations with respect to the safety specification.
We consider two settings for relaxing the constraints from the MG\@. First, as in the MSR case, we find the maximal total relaxation of the remaining clock constraints such that $L_T$ is still unreachable.  Second, we find a single relaxation value $\delta$ such that $L_T$ is still unreachable when each constraint is relaxed by $\delta$, that is referred as the robustness degree in literature~\cite{DBLP:conf/rp/BouyerMS13}.

\subsection{Maximizing the Total Change}\label{sec:mg-relax-total-change}

As described in Section~\ref{sec:msr-relax-parametric}, we parametrize each simple constraint from the considered constraint set, i.e.\ in this case, it is the MG $D \cup I$ on $\TA'=\TA_{<\Psi(\Delta) \setminus D, \Psi(Inv) \setminus I>}$. Note that each constraint in the resulting TA that is denoted by $\TA^{D\cup I}$ is parametric  and the parameter set is $P = \{p_{(v,\varphi)} \mid (v, \varphi) \in D \cup I\}$. Then, we use a parameter synthesis tool that generates the set of all parameter valuations $\mathbb{P} \subseteq \mathbb{R}_+^{\mid P \mid}$ for $\TA^{D\cup I}$ such that the set $L_T$ is still unreachable. Finally, we chose the integer valued parameter valuation $\mathbf{p}^\star : P  \to \mathbb{N}$ that maximize the total change, i.e,  $\mathbf{p}^\star = \arg\max_{\mathbf{p} \in \mathbb{P} \cap \mathbb{N}} \sum_{(v,\varphi) \in D \cup I} \mathbf{p}(p_{v,\varphi})$. Note that, the maximal total change is finite since $\mathbf{p}(p_{v,\varphi})$ is finite for each valuation $\mathbf{p} \in \mathbb{P}$ and constraint $(v,\varphi) \in D \cup I$  due to the minimality of the MG\@.  The integer valued parameter valuation identifies the maximal total change in the constraint thresholds that can be applied to the TA $\TA_{<\Psi(\Delta) \setminus D, \Psi(Inv) \setminus I>}$ without violating the safety specification.  In particular, for any relaxation valuation $\mathbf{r}$ over $D \cup I$ with $\sum_{(v,\varphi) \in D \cup I} \mathbf{r}(v,\varphi) > \sum_{(v,\varphi) \in D \cup I} \mathbf{p}^\star(p_{v,\varphi})$,  the automaton $\TA'_{<D, I, \mathbf{r}>}$ violates the safety specification.

\subsection{Finding the Robustness Degree}\label{sec:mg-relax-delta}
A timed automaton $\TA = (L, l_0, C, \Delta, Inv )$ is said to $\delta$-robustly satisfy a linear-time property, such as a safety property, if the TA $\TA_{<\Psi(\Delta), \Psi(Inv), \mathbf{r}_\delta>}$ obtained by relaxing each simple constraint of $\TA$ by $\delta$ satisfies the property~\cite{DBLP:conf/rp/BouyerMS13,DBLP:journals/fmsd/WulfDMR08}, where
 \[\mathbf{r}_\delta(v,\varphi) = \delta \text{  for each } (v,\varphi) \in \Psi(\Delta) \cup \Psi(Inv).\]
A robustness value $\delta$ can be found via parametric analysis~\cite{DBLP:conf/rp/BouyerMS13,DBLP:conf/rp/AndreS11}.
Here, our goal is to find the maximal robustness value $\delta^\star$ for the timed automaton $\TA'$ such that $L_T$ is not reachable on $\TA'_{<D, I, \mathbf{r}_{\delta^\star}>}$ (recall that $\TA' = \TA_{<\Psi(\Delta) \setminus D, \Psi(Inv) \setminus I>}$).
Let $\mathbb{P}$ be the parameter valuation set defined as in Section~\ref{sec:mg-relax-total-change}.  Then, $L_T$ is not reachable on $\TA'_{<D, I, \mathbf{r}_{\delta^\star}>}$ for each $\delta \in \mathbf{D}$ ($\TA'$ $\delta-$robustly satisfies the safety specification), where
\[\mathbf{D} = \{ \delta \mid \textbf{p} \in \mathbb{P} \text{ and } \textbf{p}(v,\varphi) \geq \delta \text{ for each } (v,\varphi) \in D \cup I\} \]
 Alternatively, one can use the same parameter $p$ for each simple constraint of $\TA'$ to obtain a parametric TA $\TA'^{D \cup I}$ from $\TA'$ by replacing each simple constraint $(v, x - y \sim c) \in D \cup I$ with $(v, x - y \sim c + p)$. The resulting TA $\TA'^{D \cup I}$ has a single parameter $p$ and $\mid D \cup I \mid$ parametric constraints. Then, a parameter synthesis tool generates the set of all parameter valuations for $\TA'^{D \cup I}$ such that the set $L_T$ is still unreachable. Note that the set obtained in the second case is equal to $\mathbf{D}$.

In literature, the robustness analysis is studied considering the imperfect implementations of the $\TA$, e.g, timing or measuring errors, thus real valued robustness is used. In this work, we analyze the properties of the timed automata model itself, i.e., constraints and the constraint thresholds. Hence, we focus on integer valued relaxations of the TA\@. For this reason, we define the optimal relaxation value as $\delta^\star = \max \mathbf{D} \cap \mathbb{N}$.


\section{Related Work}\label{sec:related}
\subsection{Timed Automata}
In the literature, the uncertainties about timing constants are handled by representing such constants as parameters in a parametric timed automaton (PTA), i.e., a TA where clock constants can be represented with parameters. Subsequently, a parameter synthesis method, such as~\cite{imitatorTool,DBLP:conf/tacas/LimeRST09,DBLP:journals/jlp/BezdekBCB18}, is used to find suitable values of the parameters for which the resultant TA satisfies the specification. However, most of the parameter synthesis problems are undecidable~\cite{10.1007/s10009-017-0467-0}. While symbolic algorithms without termination guarantees exist for some subclasses~\cite{DBLP:conf/rp/AndreS11,10.1007/978-3-319-41591-8_12,6895298,Andr2019ParametricVA}, these algorithms are computationally very expensive compared to model checking (see~\cite{10.1007/978-3-030-12988-0_5}). Furthermore, it is not straightforward to integrate the minimization of the number of modified constraints in the parameter synthesis method for the reachability properties in an efficient way. For example, Imitator tool~\cite{imitatorTool} generates all parameter valuations such that the reachability or the safety property holds when the synthesis algorithm terminates. One approach would be parametrizing each simple constraint of the TA, then finding the valuation minimizing the number of non-zero parameters returned by the tool for the reachability problem. However, due to the dependence of the computation time on the number of parameters, this approach would be impractical. Similarly, for the safety problem, each constraint can be parametrized and further analysis can be performed on the result returned by the synthesis tool in order to find the minimal set of constraints that need to be left in the TA to ensure safety.  While assigning $0$ to a parameter that bounds a clock from below (i.e. $p \leq x$) or infinity to a parameter that bounds a clock from above (i.e. $x < p$) are equivalent to removing these constraints, it is not straightforward to deduce the constraint removal decision for constraints that involve multiple clocks (i.e $x - y \leq p$). Moreover, as mentioned for reachability, it would be impractical to solve the parameter synthesis problem when each constraint is parametrized.

Repair of a TA has been studied in recent works~\cite{10.1007/978-3-030-25540-4_5,ErgurtunaEarlyaccess,repairPaper}, where, similar to the reachability problem considered in this paper, the goal is to modify a given timed automaton such that the repaired TA satisfies the specification. In~\cite{repairPaper}, it is assumed that some of the clock constraints are incorrect and the goal is to make the TA compliant with an oracle that decides if a trace of the TA belongs to a system or not. To repair the TA, the authors of~\cite{repairPaper} parametrize the initial TA and generate parameters by analyzing traces of the TA\@. They minimize the total change of the timing constraints, while we primarily minimize the number of changed constraints and then the total change. Furthermore, their approach cannot handle reachability properties.
In~\cite{10.1007/978-3-030-25540-4_5,ErgurtunaEarlyaccess}, the goal is to repair the TA to avoid undesired behaviors, e.g., traces violating universal properties such as safety. In particular, in~\cite{10.1007/978-3-030-25540-4_5}, a single violating trace is analyzed by running an SMT solver on a linear arithmetic encoding of the trace. The generated repair suggestions include introducing clock resets and changing the clock constraints (both constraint bounds and constraint operators). As these operations can significantly change the set of traces of the automaton, they check the equivalence of the original and the repaired models after applying the suggested repair. In~\cite{ErgurtunaEarlyaccess}, new clocks and constraints over these new clocks are introduced to restrict the behavior of the automaton to eliminate the violating traces. Neither of these approaches can handle reachability properties. For safety properties, we consider a timed automaton satisfying the property, identify the constraints of the automaton that are effective in the satisfaction of the property and further analyze these constraints. On the other hand, both~\cite{10.1007/978-3-030-25540-4_5} and~\cite{ErgurtunaEarlyaccess} aim at repairing a TA that violates the given property.

The robustness of timed automata is studied considering non-ideal implementations of the model, i.e., imprecise clocks, measuring errors, etc.~\cite{DBLP:conf/rp/BouyerMS13,DBLP:journals/fmsd/WulfDMR08}. A timed automaton is said to be robust against clock perturbations and drifts for safety specifications when a TA obtained by allowing the clocks to drift within the given limits and relaxing each constraint by a certain amount satisfies the specification.
A complementary approach to robustness analysis is called shrinkability~\cite{DBLP:conf/fsttcs/SankurBM11,DBLP:conf/cav/Sankur13}: tighten (shrink) all of the constraints by a positive amount while guaranteeing that the resulting automaton is non-blocking and/or time abstract simulates the original one (thus preserves the safety and reachability properties).
Consequently, the shrunk automaton is robust against constraint perturbations. Region automata construction and difference bound matrices are used for the computation of the robustness degree in~\cite{DBLP:conf/rp/BouyerMS13,DBLP:journals/fmsd/WulfDMR08,DBLP:conf/fsttcs/SankurBM11,DBLP:conf/cav/Sankur13}. A parameter synthesis method is also utilized to find the robustness in~\cite{imitatorTool}. In this work, a similar constraint relaxation approach is used for reachability and safety specifications. To satisfy reachability specifications, we relax the constraints from minimal sufficient reductions. For safety specifications, we first identify a set of constraints that are active in satisfying the safety specification (minimal guarantee, MG), and then perform robustness analysis only over these constraints. In order to relax the identified constraints, we present an MILP based approach and also employ parameter synthesis by parametrizing constraints from the identified sets.

\subsection{Minimal Sets over a Monotone Predicate}
Although the concepts of minimal sufficient reductions (MSRs) and minimal guarantees (MGs) are novel in the context of timed automata,
similar concepts appear in other areas of computer science.
For example, see
minimal unsatisfiable subsets~\cite{DBLP:conf/ppdp/BandaSW03}, minimal correction subsets~\cite{DBLP:conf/ijcai/Marques-SilvaHJPB13}, minimal inconsistent subsets~\cite{looney,issta}, or minimal inductive validity cores~\cite{DBLP:conf/fmcad/GhassabaniWG17}.
All these concepts can be generalized as \emph{minimal sets over monotone predicates (MSMPs)}~\cite{DBLP:conf/cav/Marques-SilvaJB13,DBLP:journals/ai/Marques-SilvaJM17}. The input is a reference set $R$ and a monotone predicate $\mathbf{P}: \mathcal{P}(R) \rightarrow \{ 1, 0 \}$, and the goal is to find minimal subsets of $R$ that satisfy the predicate.
In the case of MSRs, the reference set is the set of all simple constraints $\Psi(\Delta) \cup \Psi(Inv)$ and, for every $D \cup I \subseteq \Psi(\Delta) \cup \Psi(Inv)$, the predicate is defined as $\mathbf{P}(D \cup I) = 1$ iff $\TA_{<D,I>}$ is sufficient.
Similarly, in the case of MGs,
the reference set is the set of all simple constraints $\Psi(\Delta) \cup \Psi(Inv)$ and, for every $D \cup I \subseteq \Psi(\Delta) \cup \Psi(Inv)$, the predicate is defined as $\mathbf{P}(D \cup I) = 1$ iff $\TA_{<\Psi(\Delta) \setminus D, \Psi(Inv) \setminus I>}$ is insufficient.

Many algorithms for finding MSMPes were proposed (e.g.,~\cite{DBLP:conf/cp/IgnatievPLM15,DBLP:journals/constraints/LiffitonMLAMS09,marco,mcsmus,tome,remus,unimus,DBLP:conf/ijcai/Marques-SilvaHJPB13,rime,DBLP:journals/constraints/IvriiMMV16,DBLP:conf/fmcad/GhassabaniWG17,DBLP:conf/sefm/BendikGWC18}), including also several algorithms (e.g.~\cite{DBLP:conf/cp/IgnatievPLM15,DBLP:journals/constraints/LiffitonMLAMS09,DBLP:journals/constraints/IgnatievJM16}) for extracting minimum MSMPs.
Most of the existing algorithms are  \emph{domain-specific}, i.e.\ tailored to a particular instance of MSMP and extensively exploiting specific properties of the instances (such as we exploit reduction cores in case of MSRs). Hence, the domain-specific solutions cannot be directly used for finding MSRs and/or MGs.
Several \emph{domain-agnostic} MSMP identification algorithms (e.g.~\cite{daa,pdds,marco}) were also proposed, i.e., algorithms that can be used for any type of MSMPs. Due to their universality, domain-agnostic approaches are usually not as efficient as the domain-specific solutions.
However, it is often the case that a domain-agnostic algorithm serves as a basis while building a domain-specific solution~\cite{DBLP:conf/lpar/BendikC18,dissertation}.
Some techniques we presented in this paper, including mainly the symbolic representation (Section~\ref{sec:representation}) and  the shrinking and growing procedures, are specializations of existing domain-agnostic solutions (see~\cite{marco, dissertation}).


\section{Experimental Evaluation}\label{sec:experiments}

We implemented the proposed reduction, guarantee and relaxation methods in a tool called Tamus.
We use UPPAAL~\cite{UPPAAL4:0} for sufficiency checks and witness computation, Imitator~\cite{imitatorTool} for parameter synthesis for PTA and CBC solver from Or-tools library~\cite{ortools} for the MILP part. All experiments were run on a laptop with Intel i5 quad core processor at 2.5 GHz and 8 GB ram using a time limit of 20 minutes per benchmark. The tool and used benchmarks are available at
\url{https://github.com/jar-ben/tamus}.

As discussed in Section~\ref{sec:related}, an alternative approach to solve the MSR problem
(Problem~\ref{prob:msr-relax}) is to
parameterize each simple clock constraint of the TA\@. Then, we can run a parameter synthesis tool on the parameterized TA
to identify the set of all possible valuations of the parameters for which the TA satisfies the reachability property.
Subsequently, we can choose the valuations that assign non-zero values (i.e., relax) to the minimum number of parameters, and out of these, we can choose the one with a minimum cumulative change of timing constants.
In our experimental evaluation, we evaluate the
state-of-the-art parameter synthesis tool Imitator~\cite{imitatorTool} to run such analysis. Although Imitator is not tailored for our problem, it allows us to measure the relative scalability of our approach compared to a well-established synthesis technique. In addition, we employ Imitator to solve the parameter synthesis problems for finding the optimal relaxation for a given MSR (Section~\ref{sec:msr-relax-parametric}), to find the maximal total change for a given MG (Section~\ref{sec:mg-relax-total-change}) and to find the robustness degree for the MG (Section~\ref{sec:mg-relax-delta}).

We used two collections of benchmarks to evaluate the proposed methods: one is obtained from the literature, and the other are crafted timed automata modeling a machine scheduling problem. In the following, we introduce these benchmarks and present the results of the experiments for reductions and guarantees.

\subsection{Experimental Results on Machine Scheduling Automata}\label{sec:machineScheduling}

\begin{table}[t]
\centering

\begin{tabular}{  l  c  c  c  c  c  c  c  c  c  c  c  c  c }
\toprule
\multicolumn{2}{c}{Model}&\multicolumn{4}{c}{MSR Results} & \multicolumn{8}{c}{MG Results} \\
\cmidrule(lr){1-2} \cmidrule(lr){3-6} \cmidrule(lr){7-14}
Name & $|\Psi|$ & $ d_{R} $  & $v_{R}$ &  $t_{R}$ & $c_{R}$ & $ d_{G} $  & $v_{G}$ &  $t_{G}$ & $c_{G}$ & $t_G^{IT}$ & $\delta^\star_\mathbb{R}$ & $\delta^\star$ & $t_G^{ITS}$ \\
\midrule
$\TA_{(3,1,12)}$ & 11 & 2 & 33 & 0.18 & 6  & 5 & 88 & 0.55 & 2 & 0.006 & 0.59 & 0 & 0.003\\
$\TA_{(3,2,12)}$ & 17 & 1 & 13 & 0.13 & 13 & 9 & 175 & 1.47 & 14 & 0.016 & 0.59 & 0 & 0.005\\

$\TA_{(3,1,18)}$ & 16 & 3 & 61 & 0.40 & 9  & 5 & 279 & 1.95 & 2 & 0.006 & 0.59 & 0 & 0.005 \\
$\TA_{(3,2,18)}$ & 24 & 1 & 498 & 4.68 & 6 & 10 & 519 & 5.33 & 7 & 0.031 & 0.59 & 0 & 0.007 \\

$\TA_{(3,1,24)}$ & 21 & 4 & 96 & 0.73 & 12 & 5 & 985 & 8.15 & 2 & 0.005 & 0.59 & 0 & 0.002 \\
$\TA_{(3,2,24)}$ & 32 & 1 & 51 & 0.65 & 16 & 12 & 1291 & 16.76 & 0 & 0.051 & 0.59 & 0 & 0.008 \\

$\TA_{(3,1,30)}$ & 26 & 5 & 140 & 1.24 & 15 & 5 & 1829 & 17.49 & 2 & 0.007 & 0.59 & 0 & 0.003 \\
$\TA_{(3,2,30)}$ & 40 & 1 & 63 & 0.96 & 9 & 13 & 2901 & 45.54 & 0 & 0.176 & 0.59 & 0 & 0.009\\
\midrule
$\TA_{(5,1,12)}$ & 16 & 3 & 90 & 0.55 & 10 & 4 & 214 & 1.39 & 1 & 0.004 & 0.49 & 0 & 0.002\\
$\TA_{(5,2,12)}$ & 24 & 1 & 192 & 1.54 & 13 & 6 & 166 & 1.46 & 14 & 0.007 & 1.33 & 1 & 0.003 \\

$\TA_{(5,1,18)}$ & 23 & 4 & 149 & 1.04 & 16 & 4 & 399 & 3.02 & 1 & 0.004 & 0.49 & 0 & 0.004 \\
$\TA_{(5,2,18)}$ & 35 & 1 & 25 & 0.35 & 6 & 7 & 796 & 9.13 & 3 & 0.010 & 0.24 & 0 & 0.008 \\

$\TA_{(5,1,24)}$ & 31 & 6 & 327 & 2.67 & 24 & 4 & 1050 & 9.34 & 1 & 0.004 & 0.49 & 0 & 0.004 \\
$\TA_{(5,2,24)}$ & 47 & 2 & 373 & 4.86 & 31 & 8 & 2708 & 40.46 & 13 & 0.021 & 0.49 & 0 & 0.015 \\

$\TA_{(5,1,30)}$ & 39 & 7 & 571 & 5.54 & 29 & 4 & 1864 & 19.57 & 1 & 0.007 & 0.49 & 0 & 0.006 \\
$\TA_{(5,2,30)}$ & 59 & 2 & 624 & 9.45 & 17 & 9 & 1556 & 26.45 & 6 & 0.028 & 0.49 & 0 & 0.010 \\
\midrule
$\TA_{(7,1,12)}$ & 19 & 3 & 119 & 0.74 & 11 & 3 & 153 & 0.97 & 0 & 0.004 & 0.33 & 0 & 0.003 \\
$\TA_{(7,2,12)}$ & 28 & 1 & 70 & 0.62 & 13 & 5 & 247 & 2.16 & 10 & 0.005 & 0.33 & 0 & 0.003 \\

$\TA_{(7,1,18)}$ & 28 & 5 & 314 & 2.33 & 25 & 3 & 402 & 3.19 & 0 & 0.002 & 0.33 & 0 & 0.002 \\
$\TA_{(7,2,18)}$ & 42 & 1 & 175 & 2.00 & 6 & 6 & 528 & 6.45 & 3 & 0.010 & 0.33 & 0 & 0.009 \\

$\TA_{(7,1,24)}$ & 38 & 7 & 615 & 5.29 & 39 & 3 & 1442 & 14.28 & 0 & 0.002 & 0.33 & 0 & 0.002 \\
$\TA_{(7,2,24)}$ & 57 & 2 & 944 & 12.67 & 21 & 6 & 863 & 12.86 & 11 & 0.012 & 0.33 & 0 & 0.011 \\

$\TA_{(7,1,30)}$ & 48 & 10 & 1559 & 16.75 & 47 & 3 & 2302 & 26.82 & 0 & 0.008 & 0.33 & 0 & 0.003 \\
$\TA_{(7,2,30)}$ & 72 & 2 & 675 & 11.25 & 14 & 7 & 1295 & 23.43 & 4 & 0.021 & 0.33 & 0 & 0.015 \\
\bottomrule

\end{tabular}\caption{Results for the scheduler TA, where $|\Psi| = | \Psi(\Delta) \cup \Psi(I) | $ is the total number of  constraints, $d_R/d_G=| D \cup I|$ is the minimum MSR/MG size, $v_R/v_G$ is the number of reachability checks during minimum MSR/MG computation, $t_R/t_G$ is the computation time in seconds for minimum MSR/MG computation (including the reachability checks), $c_R$ is the optimal cost of~\eqref{eq:LP}, $c_G$ is the maximal total change $\sum_{(v,\varphi) \in D \cup I} \mathbf{p}^\star(p_{v,\varphi})$ for the MG, $\delta^\star_\mathbb{R}$ is the real valued maximal robustness value, and $\delta^\star$ is the integer valued optimal relaxation value. $t_G^{IT}$ and $t_G^{ITS}$ are the Imitator computation times for maximizing the total change and finding the robustness degree, respectively. }\label{table:generator}

\end{table}

A scheduler automaton is composed of a set of paths starting in location $l_0$ and ending in location $l_1$.  Each path $\pi = l_0 e_k l_k e_{k+1} \ldots l_{k+M-1} e_{k+M} l_1$ represents a particular scheduling scenario where an intermediate location, e.g. $l_i$ for $i=k, \ldots, k+M-1$, belongs to a unique path (only one incoming and one  outgoing transition). Thus, a TA that has $p$ paths with $M$ intermediate locations in each path has $M \cdot p + 2$ locations and $(M+1) \cdot p $ transitions.  Each intermediate location represents a machine operation, and periodic simple clock constraints are introduced to mimic the limitations on the corresponding durations. For example, assume that the total time to use machines represented by locations $l_{k+i}$ and $l_{k+i+1}$ is upper (or lower) bounded by $c$  for $i=0,2,\ldots, M-2$.
To capture such a constraint with a period of $t=2$, a new clock $x$ is introduced and it is reset and checked on every $t^{th}$ transition along the path, i.e., for every $m \in \{ i \cdot t + k \mid i \cdot t  \leq M-1 \}$, let $e_{m} = (l_{m},\lambda_{m},  \phi_m, l_{m+1})$, add $x$ to $\lambda_{m}$, set $\phi_m := \phi_m \wedge x \leq c$  ($ x \geq c$ for lower bound).  A periodic constraint is denoted by $(t , c, \sim)$, where $t$ is its period, $c$ is the timing constant, and ${\sim} \in \{<,\leq, >, \geq \}$. A set of such constraints are defined for each path to capture possible restrictions. In addition, a bound $T$ on the total execution time is captured with the constraint $x \leq T $ on transition $e_{k+M}$ over a clock $x$ that is not reset on any transition. A realizable path to $l_1$ represents a feasible scheduling scenario.
We have generated $24$ test cases. A test case $\TA_{(c, p, M)}$ represents a timed automaton with $c \in \{3,5,7\}$ clocks, and $p \in \{1,2\}$ paths with $M \in \{12,18,24,30\}$ intermediate locations in each path. $R_{c,i}$ is the set of periodic restrictions defined for the $i^{th}$ path of an automaton with $c$ clocks:
\begin{align*}
& R_{3,1} = \{ (2, 11, \geq), (3, 15, \leq) \} & R_{3,2} = \{ (4, 17, \geq), (5, 20, \leq) \} \\
& R_{5,1}= R_{3,1} \cup \{ (4, 21, \geq), (5, 25, \leq) \} & R_{5,2} = R_{3,2} \cup \{(8, 33, \geq), (9, 36, \leq) \} \\
& R_{7,1}= R_{5,1} \cup \{ (6, 31, \geq), (7, 35, \leq) \} & R_{7,2} = R_{5,2} \cup \{(12, 49, \geq), (12, 52, \leq) \}
\end{align*}
Note that $\TA_{(c, 2, M)}$ emerges from $\TA_{(c, 1, M)}$ by adding a path with restrictions~$R_{c,2}$.

\paragraph{MSR analysis.} A path to $l_1$ describes a scheduling scenario for a scheduler automaton ($\TA_{(c, p, M)}$). However, location $l_1$ is unreachable for  each  of the introduced automata. Thus, our goal is to find a realizable path to $l_1$ by performing a minimum amount of change. In order to achieve this, we define the target set as $L_T = \{l_1\}$ and run the developed MSR methods. The results obtained on the scheduler automata are summarized in Table~\ref{table:generator}.
Tamus solved all models and the longest computation time was $16.75$ seconds.  As expected, the computation time $t_R$ is depends on the number
 $|\Psi|$ of simple clock constraints in the model.

When each simple constraint is parametrized, Imitator solved $\TA_{(3,1,12)}$, $\TA_{(3,2,12)}$, $\TA_{(3,1,18)}$, and $\TA_{(5,1,12)}$ within $0.09$, $0.5$, $62$, and $71$ seconds, respectively, and timed-out for the other models. In addition, we run Imitator with a flag ``witness'' that terminates the computation when a satisfying valuation is found. The use of this flag reduced the computation time
 for the aforementioned cases, and it allowed to solve two more models: $\TA_{(3,2,18)}$ and $\TA_{(5,2,12)}$. However, using this flag,  Imitator often did not provide a solution that minimizes the number of relaxed simple clock constraints.

\paragraph{MG analysis.} We also run the developed methods to find MGs, the corresponding maximal total changes and robustness values over the scheduler automata models with $L_T = \{l_1\}$. The results are reported in Table~\ref{table:generator}. Tamus was also able to generate MG results for all models and $\TA_{(5,2,24)}$ took the longest with $40.46$ seconds. In the MG case, when any of the $|\Psi| - d_G + 1$ simple constraints are removed from the original scheduler automata $\TA_{(c, p, M)}$, $L_T$ becomes reachable.  In addition, we run Imitator to find the maximal total change (Section~\ref{sec:mg-relax-total-change}) and the robustness degree (Section~\ref{sec:mg-relax-delta}) for the identified MG $D \cup I$. Specifically we first ran Tamus on TA $\TA_{(c, p, M)}$ and then removed every constraint that is not in $D \cup I$ (i.e., obtained $\TA_{(c, p, M),  <\Psi(\Delta) \setminus D, \Psi(Inv) \setminus I>}$) and parameterized every constraint that is in $D \cup I$. A different parameter is used for every constraint to find the maximal total change and the same parameter is used for every constraint to find the robustness degree. Since $d_G$ is much smaller than $|\Psi|$, Imitator generated the results for every model within $0.2$ seconds for both parameter synthesis approaches. Both integer valued and real valued results for the parameters are reported in Table~\ref{table:generator}. Location $l_1$ becomes reachable when each simple constraint in $\TA_{(c, p, M),  <\Psi(\Delta) \setminus D, \Psi(Inv) \setminus I>}$ is relaxed by $\delta^\star + 1$.

\begin{table}[t]
\centering
\begin{tabular}{  c  c  c  c  c  c  }
\toprule
Model & Source & Spec. & $|\Psi|$ & $|\Psi^{u}|$ &  $m$  \\
\midrule
accel1000 &~\cite{DBLP:journals/corr/abs-1812-08940}\cite{ARCH15:Benchmarks_for_Temporal_Logic} & reach. & 7690 & 13 & 3 \\

CAS &~\cite{10.1007/978-3-642-38916-0_2} & reach. & 18 & 18 & 9 \\

coffee &~\cite{Andr2019ParametricVA} & reach. & 10  & 10 & 3  \\

Jobshop4 &~\cite{10.1007/3-540-44585-4_46} & reach. & 64 & 48 & 5  \\

Pipeline3-3 &~\cite{Knapik2010BoundedMC} & reach. & 41 & 41 & 12\\ 

RCP &~\cite{CollombAnnichini2001ParameterizedRA} & reach. & 42 & 42 & 11 \\

SIMOP3 &~\cite{Andre:2009} & reach. & 80 & 80  & 40 \\

Fischer &~\cite{10.1007/3-540-45319-9_14} & safety & 24 & 16 & 0  \\

JLR13-3tasks &~\cite{10.1007/978-3-642-36742-7_28}\cite{10.1007/978-3-319-17524-9_5} & safety & 42 & 36 & 0  \\ 

WFAS &~\cite{DBLP:journals/corr/BenesBLS15}\cite{10.1007/978-3-319-06410-9_44}& safety & 32 & 24 & 0 \\
\bottomrule

\end{tabular}\caption{Properties of the benchmarks, where $|\Psi|$ is as defined in Table~\ref{table:generator}, $|\Psi^u|$ is the number of constraints considered in the analysis and $m$ is the number of mutated constraints.}\label{table:benchmarks}
\end{table}

\subsection{Experimental Results on Benchmarks from Literature}\hspace{3pt}
We collected 10 example models from the literature that include models with a safety specification that requires avoiding a set of locations $L_T$, and models with a reachability specification with a set of target locations $L_T$. In both cases, the original models satisfy the given specification. Eight of the examples are networks of TAs, and while a network of TAs can be represented as a single product TA and hence our methods can handle it,  Tamus currently supports only MSR and MG computations for networks of TA, but not MILP relaxation. The properties of these models are summarized in Table~\ref{table:benchmarks}.

\paragraph{MSR analysis.} For the safety specifications, we define $L_T$ as the target set and apply our methods for MSRs.  Here, we find the minimal number of timing constants that should be changed to reach $L_T$, i.e., to violate the original safety specification. On the other hand, for reachability specifications, inspired by mutation testing~\cite{10.1007/978-3-642-38916-0_2}, we change a number of constraints on the original model so that $L_T$ becomes unreachable. The number of mutated constraints are shown in Table~\ref{table:benchmarks}.

The MSR results are shown in Table~\ref{table:msr_results}. Tamus computed a minimum MSR for all the models and also provided the MILP relaxation for the non-network models. Note that the bottleneck of our approach is the MSR computation and especially the verifier calls; the MILP part always took only few milliseconds (including models from Table~\ref{table:generator}), thus we believe that it would be also the case for the networks of TAs.
The base variant of Imitator that computes the set of all satisfying parameter valuations solved only 4 of the 10 models. When run with the early termination flag, Imitator solved 3 more models, however, as discussed above, the provided solutions might not be optimal.

We have also evaluated Imitator for parameter synthesis based relaxation  (Section~\ref{sec:msr-relax-parametric}). In particular, we first run Tamus to compute a minimum MSR $\TA_{<D,I>}$, then parameterized the constraints $D \cup I$ in the original TA $\TA$, and run Imitator on the parameterized TA\@.
In this case, Imitator solved 9 out of 10 models. Moreover, we have the guarantee that we found the optimal solution: the minimum MSR ensures that we relax the minimum number of simple clock constraints, and Imitator finds all satisfying parameterizations of the constraints hence also the one with minimum cumulative change of timing constants.

\begin{table}[t]
\centering
\begin{tabular}{  c  c  c  c  c  c  c  c  c  c  }
\toprule
Model & $ d_R $ & $v_R$ &  $t_R$ & $c_R$ & $c_{I}$ & $t_R^I$ & $t_R^{IT}$ & $t_R^{Iw}$ & $t_R^{ITw} $ \\
\midrule
accel1000  & 2  & 22 & 1.50 & - & 5850 & 184.5 & 1.73 & 1.47 & 0.88 \\ 

CAS  & 2 & 46 & 0.36 & 16 & 16 & 0.80 & 0.10 & 0.06 & 0.02 \\

coffee  & 2  & 18 & 0.09 & 14 & 14 & 0.02 & 0.008 & 0.01 & 0.007 \\

Jobshop4  & 5  & 272 & 2.62 & - & 5 & to & 305.1 & to & 301.1 \\ 

Pipeline3-3  & 1  & 42 & 0.43 & - & 2 & to & 0.04 & to & 0.03 \\ 

RCP  & 1  & 99 & 1.69 & - & 253 & to & 0.04  & 34.38 & 0.03 \\ 

SIMOP3 & 6  & 833 & 14.24 & - & 3755 & to & 3.98 & to & 0.37 \\ 

Fischer & 1  & 14 & 0.09 & - & - & to & to & 0.17 & 0.03 \\ 

JLR13-3tasks & 1  & 40 & 0.51 & - & 20 & to & 3.24 & 0.31 & 0.10 \\ 

WFAS & 1  & 10 & 0.09 & - & 6 & 14.69 & 0.02 & 14.66 & 0.03 \\ 
\bottomrule

\end{tabular}\caption{Experimental MSR analysis results for the benchmarks, where $d_R$, $v_R$, $t_R$ and $c_R$ are as defined in Table~\ref{table:generator}. $c_{I}$ is the minimal total change $\sum_{(v,\varphi) \in D \cup I} \mathbf{p}^\star(p_{v,\varphi})$ for the MSR as explained in Section~\ref{sec:msr-relax-parametric}.
$t_R^I$, $t_R^{IT}$, $t_R^{Iw}$ and $t_R^{ITw}$ are the Imitator computation times, where $w$ indicates that the early termination flag (``witness'') is used, otherwise the largest set of parameters is searched, and $T$ indicates that only the constraints from the MSR identified by Tamus are parametrized, otherwise all constraints from $\Psi^u$ are parametrized. $to$ shows that the timeout limit is reached (20 min.).}\label{table:msr_results}
\end{table}

\paragraph{MG analysis.} In order to apply the developed methods for the guarantee sets, we use $L_T$ as the set of unsafe locations for both safety and reachability specifications. For the safety specifications, the minimal MG is the minimal set of constraints that are effective in avoiding the unsafe behaviors as intended. On the other hand, for the reachability specifications, as in the MSR analysis, we mutate a number of constraints on the original model so that $L_T$ becomes unreachable and then apply our MG methods.

The MG results are shown in Table~\ref{table:mg_results}. Tamus computed a minimum MG for all the models. The maximal total change and the robustness degrees are generated using Imitator. Imitator found robustness degrees for all of the models, whereas it created maximal total change results for 9 out of 10 models. The running time for Imitator for both parameter synthesis approaches is below 15 seconds for all the results it produced.

\begin{table}[t]
\centering
\begin{tabular}{  c  c  c  c  c  c  c  c  c }
\toprule
Model & $ d_G $ & $v_G$ &  $t_G$ & $c_G$  & $t_G^{IT}$ & $\delta^\star_\mathbb{R}$ & $\delta^\star$ & $t_G^{ITS}$\\
\midrule
accel1000 & 1 & 23 & 1.84 & 1557 & 1.3 & 1557.99 & 1557 & 1.2\\

CAS & 2 & 39 & 0.30 & 11 & 0.004 & 5.99 & 5 & 0.003\\

coffee & 2 & 25 & 0.12 & 7 & 0.005 & 3.99 & 3 & 0.002\\

Jobshop4 & 2 & 281 & 2.94 & 0 & 4.44 & 0.49 & 0 & 3.47\\

Pipeline3-3 & 2 & 82 & 0.67 & 1 & 0.04 & 0.99 & 0 & 0.03\\ 

RCP & 8 & 255 & 4.46 & 463 & 0.32 & 8.99 & 8 & 0.03\\

SIMOP3 & 2 & 214 & 3.81 & 417& 0.003 & 208.99 & 208 & 0.002\\

Fischer & 4 & 46 & 0.27 & 2 & 0.05 & 0.50 & 0 & 0.03\\

JLR13-3tasks & 10 & 101 & 1.26 & - & to & 7.99 & 7 & 14.12\\ 

WFAS & 5 & 66 & 0.47 & 0 & 0.06 & 0.00 & 0 & 0.018 \\
\bottomrule

\end{tabular}\caption{Experimental MG analysis results for the benchmarks, where $d_G$, $v_G$, $t_G$, $c_G$, $ \delta^\star_\mathbb{R}$, $ \delta^\star$, $t_G^{IT}$ and $t_G^{ITS}$ are as defined in Table~\ref{table:generator}.}\label{table:mg_results}
\end{table}

\section{Conclusion}\label{sec:conclusion}
We proposed the novel concept of a minimum MSR for a TA, i.e., a minimal set of simple clock constraints that need to be relaxed to satisfy a given specification. Moreover, we developed efficient techniques to find a minimum MSR, and presented MILP and parameter synthesis methods how to further tune the constraints in the MSR\@.
We also introduced the concept of a maximum MIR, i.e., a maximal set of simple clock constraints that can be removed from the TA without violating the specification. Dually, one can represent an MIR via its complementary MG, i.e., a minimal set of simple clock constraints that need to be left in the TA to ensure that the specification is not violated. Moreover, we proposed parameter synthesis based approaches that can
further relax the constraints in the MG while still keeping the specification satisfied.

Our empirical analysis showed that our tool, Tamus, can generate minimum MSRs and minimum MGs within seconds even for large systems.
For the task of MSR relaxation, we have shown that the MILP method is faster than the parameter synthesis approach (MSR + Imitator). However, the MILP approach minimizes the cumulative change of the constraints from a minimum MSR by considering a single witness path. If the goal is to find a minimal relaxation globally, i.e., w.r.t.\ all witness paths for the MSR, we recommend using the combined version of MSR and Imitator, i.e., first run Tamus to find a minimum MSR, parametrize each constraint from the MSR and run Imitator to
find all satisfying parameter valuations, including the global optimum.

\section*{Acknowledgment}
This research was supported in part by ERDF ``CyberSecurity, CyberCrime and Critical Information Infrastructures Center of Excellence'' (No. CZ.$02.1.01/0.0/$ $0.0/16\_019/0000822$) and in part by the European Union's Horizon 2020 research and innovation programme under the Marie Sklodowska-Curie grant agreement No. 798482.

\bibliography{ebru_t}
\bibliographystyle{alphaurl}

\end{document}